\newtheorem{lem}{Lemma}
\newtheorem{thm}{Theorem}
\newtheorem{coro}{Corollary}
\newtheorem{prop}{Proposition}
\newtheorem{exa}{Example}
\begin{document}
\newenvironment{prf}[1][Proof]{\noindent\textit{#1}\quad }

\begin{frontmatter}
\title{A tight upper bound on the number of  non-zero weights of   a constacyclic  code}

\author[1]{Hanglong Zhang\corref{cor1}}
\ead{zhanghl9451@163.com}
\author[1,2]{Xiwang Cao}
\ead{xwcao@nuaa.edu.cn}
{\cortext[cor1]{Corresponding author.}}

\tnotetext[]
{This research is supported by National Natural Science Foundation  of China under Grant 11771007.}


\address[1]{Department of Mathematics, Nanjing University of Aeronautics and Astronautics, Nanjing 210016, China}
\address[2]{Key Laboratory of Mathematical Modelling and High Performance Computing of Air Vehicles (NUAA), MIIT,  Nanjing  211106,  China}

\begin{abstract}
For a  simple-root $\lambda$-constacyclic code $\mathcal{C}$ over $\mathbb{F}_q$,	let $\langle\rho\rangle$ and $\langle\rho,M\rangle$ be the subgroups of the automorphism group of $\mathcal{C}$ generated by the cyclic shift $\rho$, and by  the cyclic shift $\rho$ and the scalar multiplication $M$, respectively. Let $N_G(\mathcal{C}^\ast)$ be the number of orbits of a subgroup $G$ of automorphism group of $\mathcal{C}$ acting  on $\mathcal{C}^\ast=\mathcal{C}\backslash\{0\}$.
 In this paper,  we establish  explicit formulas for $N_{\langle\rho\rangle}(\mathcal{C}^\ast)$ and $N_{\langle\rho,M\rangle}(\mathcal{C}^\ast)$.  Consequently, we  derive a upper bound on the number of nonzero weights of   $\mathcal{C}$. We present some irreducible and reducible  $\lambda$-constacyclic codes, which show that the upper bound is tight. A sufficient condition to  guarantee $N_{\langle\rho\rangle}(\mathcal{C}^\ast)=N_{\langle\rho,M\rangle}(\mathcal{C}^\ast)$ is presented. 
\end{abstract}

\begin{keyword}Cyclic code; constacyclic code; Hamming weight; group action.

\end{keyword}
\end{frontmatter}

\section{Introduction}

In this paper, let $n$ be  a positive integer, $q$  a prime power and $\gcd(n,q)=1$. Let  $\mathbb{F}_q$ be the finite field with $q$ elements. An $[n,k]$ linear code $\mathcal{C}$ over $\mathbb{F}_q$ is a $k$-dimensional subspace of $\mathbb{F}^n_q$. For a codeword $c=(c_0,c_1,\cdots,c_{n-1})\in\mathcal{C}$, the Hamming weight of $c$ is the number of nonzero coordinates in $c$. The minimum Hamming weight of $\mathcal{C}$ is the smallest Hamming weight of the nonzero codewords of $\mathcal{C}$. Let $A_i$ be the number of codewords of weight $i$ in $\mathcal{C}$. The $(n+1)$-tuple $(A_0,A_1,\cdots,A_{n})$ is called the weight distribution or weight spectrum of $\mathcal{C}$.
For $\lambda\in\mathbb{F}_q^{\ast}$, a linear  code $\mathcal{C}$ of length $n$ over $\mathbb{F}_q$ is called $\lambda$-constacyclic  if for every $(c_0,c_1,...,c_{n-1})\in\mathcal{C}$ implies $(\lambda c_{n-1},c_0,c_1,...,c_{n-2})\in\mathcal{C}$. A  $1$-constacyclic code is called a cyclic code.

 Delsarte  \cite{dels} proved that the number of nonzero weights of a linear code  plays an important role in the research and application of coding theory. In \cite{dels}, the minimum Hamming weight and the number of nonzero weights of a given code and its dual code were called ``four fundamental parameters of a code". Let $\theta$ and $\theta'$ be the numbers of nonzero weights  of  a linear code $\mathcal{C}$ of length $n$ over $\mathbb{F}_q$ and that of its dual, respectively.
  Delsarte showed that  $\theta$ and $\theta'$ can be used to estimate the   size of the code:
  $$  q^n \big/ \bigg(  \sum_{i=0}^{\theta'} \binom{n}{i}(q-1)^i \bigg) \leq \big|\mathcal{C}\big| \leq  \sum_{j=0}^{\theta} \binom{n}{j}(q-1)^j.$$
 Some close relations  between codes and combinatorial designs based on the number of nonzero weights of  codes were showed  in \cite{assm} and \cite{dels}.

 The number of nonzero weights of  linear codes is a useful parameter, however, it is very hard to get an explicit formula  for this number in general.  Thus many scholars tried to get some lower and upper bounds on the number of nonzero weights of  linear codes. Recently, Shi \emph{et al.} released a series of papers \cite{shi3,shi1,shi2} to study the number of nonzero weights of linear codes.
  In \cite{shi3}, Shi  \emph{et al.}
   conjectured that the largest number of nonzero weights of a linear code of dimension $k$ over $\mathbb{F}_q$  is  $\frac{q^k-1}{q-1}$, and proved that the bound is tight for $q=2$ or $k=2$. The conjecture was completely proved by Alderson and  Neri \cite{alde}.  
   In \cite{alde0},  Alderson defined full weight codes as the codes of length $n$ with exactly $n$ nonzero weights, and provided some sufficient and  necessary conditions for the existence of full weight codes.
    In \cite{shi1},  Shi  \emph{et al.}  presented lower and upper bounds on  the largest number of nonzero weights of cyclic codes;   sharper upper bounds  were given for some special classes of cyclic codes; the numbers of nonzero weights of $q$-ary Reed-Muller codes, and of many $q$-ary Hamming codes were determined.  
 Later, in \cite{shi2}, Shi \emph{et al.}  investigated the largest number of nonzero weights of quasi-cyclic codes. Several lower and upper bounds were provided.
 In \cite{chen1}, Chen and Zhang studied the group action of a subgroup of automorphism group of a cyclic code $\mathcal{C}$ generated by the cyclic shift and the scalar multiplication on $\mathcal{C}\backslash\{0\}$. An  explicit formula for the number of  orbits of the group action was given. Consequently, a upper bound on the number of nonzero weights of $\mathcal{C}$ was derived. A sufficient and  necessary condition for codes meeting  the bound was  exhibited. Some irreducible and reducible cyclic codes  meeting the bound were presented, revealing that the bound is tight.
 
 Inspired by \cite{chen1}, in this paper, we aim to  give a tight upper bound on the number of nonzero weights of a simple-root $\lambda$-constacyclic code $\mathcal{C}$ over $\mathbb{F}_q$. 
 In \cite[Proposition \uppercase\expandafter{\romannumeral2}.2]{chen1}, the author observed that the number of nonzero weights of a linear code is bounded from above by  the number of orbits of  the subgroup of automorphism group  acting  on the linear code. By the virtue of this proposition, our
goal is  turn to determine the number of  orbits   of  the subgroup of automorphism group  acting  on a $\lambda$-constacyclic code.
   We give    explicit formulas for   $N_{\langle\rho\rangle}(\mathcal{C}^\ast)$ and $N_{\langle\rho,M\rangle}(\mathcal{C}^\ast)$.  Consequently, by using  \cite[Proposition \uppercase\expandafter{\romannumeral2}.2]{chen1}, we  derive a upper bound on the number of nonzero weights of   $\mathcal{C}$.  We exhibit some irreducible and reducible $\lambda$-constacyclic codes meeting this bound, proving that this bound is tight.  A sufficient condition to  guarantee  $N_{\langle\rho\rangle}(\mathcal{C}^\ast)=$ $N_{\langle\rho,M\rangle}(\mathcal{C}^\ast)$ is presented. Our main results  generalize some of the results in \cite{chen1}.
 
 Here goes  the structure of this paper. In Section 2, we introduce some definitions and results of group actions, $\lambda$-constacyclic codes and their generating idempotents. 
 Section 3 is divided  into two subsections 3.1 and 3.2, which investigate $N_{\langle\rho\rangle}(\mathcal{C}^\ast)$ and $N_{\langle\rho,M\rangle}(\mathcal{C}^\ast)$, respectively. This leads to our main conclusion, Theorem \ref{thm2}.
 Some examples prove that  the bound is tight.  A sufficient condition to  guarantee $N_{\langle\rho\rangle}(\mathcal{C}^\ast)=$  $N_{\langle\rho,M\rangle}(\mathcal{C}^\ast)$ is presented.


\section{Preliminaries}
\subsection{\texorpdfstring {$\lambda$}{}-constacyclic code and primitive idempotents}

Since the vector $(c_0,c_1,...,c_{n-1})\in\mathbb{F}^n_q$ can be  identified with a  polynomial $c_0+c_1x+...+c_{n-1}x^{n-1}\in\mathbb{F}_q[x]/ \langle x^n-\lambda \rangle$,  a $\lambda$-constacyclic  code of length $n$ over $\mathbb{F}_q$ is an ideal of the quotient ring $\mathcal{R}_{n,\lambda}^{(q)}:=\mathbb{F}_q[x]/ \langle x^n-\lambda \rangle$. Furthermore, every ideal in  $\mathcal{R}_{n,\lambda}^{(q)}$ is  principal, then $\mathcal{C}$ can be expressed as $\mathcal{C}=\langle g(x)\rangle$, where $g(x)$ is a divisor of $x^n-\lambda$ and has the least degree. The  $g(x)$ is unique and called the generator polynomial of $\mathcal{C}$ and $h(x)=\frac{x^n-\lambda}{g(x)}$ is called the check polynomial of $\mathcal{C}$.  The  quotient ring  $\mathcal{R}_{n,\lambda}^{(q)}$ is semi-simple when $\gcd(n,q)=1$. Each $\lambda$-constacyclic code in  $\mathcal{R}_{n,\lambda}^{(q)}$ can be   generated by a unique idempotent. This idempotent is called the generating idempotent of the $\lambda$-constacyclic code.

Let $\lambda\in\mathbb{F}_q^{\ast}$ with multiplicative order $t$. Since $t\mid q-1$ and $\gcd(n,q)=1$, we have $\gcd(tn,q)=1$.
 Let $m$ be the least integer such that $tn \mid q^m-1$.  Let $\omega$ be a primitive element of $\mathbb{F}_{q^m}^{\ast}$, $\lambda=\omega^{\frac{q^m-1}{t}}$ and $\zeta=\omega^{\frac{q^m-1}{tn}}$. Then $\zeta$ is a primitive $tn$-th root of unity in $\mathbb{F}_{q^m}^{\ast}$ and $\lambda=\zeta^{n}$. Therefore, $x^n-\lambda=\prod_{i=0}^{n-1}(x-\zeta^{1+ti})$. 
Let
\begin{align*}
	 C_{1+t\alpha_0}=\{1,q,\cdots,q^{d_0-1}\}, C_{1+t\alpha_1}=\{ 1+t\alpha_1,(1+t\alpha_1)q,\cdots,(1+t\alpha_1)q^{d_1-1}  \},\\
	 \cdots,
	 C_{1+t\alpha_s}=\{ 1+t\alpha_s,(1+t\alpha_s)q,\cdots,(1+t\alpha_s)q^{d_s-1}  \}
\end{align*}
be all distinct $q$-cyclotomic cosets modulo $tn$ contained in the set $\mathcal{S}:=\{1+ti|i=0,\cdots,n-1\}$,  where $0=\alpha_0<\alpha_1<\cdots<\alpha_s\leq n-1$ and  $d_j$ is the smallest positive integer such that $(1+t\alpha_j)q^{d_j}\equiv 1+t\alpha_j ({\rm mod}~ tn)$ for $0\leq j\leq s$.  Then $|C_{1+t\alpha_j}|=d_{j}$  for $0\leq j\leq s$. As is easily  checked, $C_{1+t\alpha_0},\cdots,C_{1+t\alpha_s}$ form a partition of the set $\mathcal{S}$. Hence, $x^n-\lambda$ can be decomposed into
$x^n-\lambda=\prod_{i=0}^{s}m_i(x)$,
where $m_i(x)=\prod_{j\in C_{1+t\alpha_i}}(x-\zeta^{j})$  is irreducible over $\mathbb{F}_q$ and $m_0(x),\cdots,m_s(x)$ are pairwise coprime. For each $0\leq i\leq s$,
let $\mathcal{I}_i$ be the minimal ideal in $\mathcal{R}_{n,\lambda}^{(q)}$ with the generator polynomial $\frac{x^n-\lambda}{m_i(x)}$ and let  $\varepsilon_i$ be the generating idempotent of $\mathcal{I}_i$.  Then $\dim(\mathcal{I}_i)=|C_{1+t\alpha_i}|=d_i$. The idempotents $\varepsilon_0,\varepsilon_1,\cdots,\varepsilon_s$ are called the primitive idempotents of $\mathcal{R}_{n,\lambda}^{(q)}$.   It is known that $\varepsilon_i(\zeta^j)=1$ when $j\in C_{1+t\alpha_i}$ and  $\varepsilon_i(\zeta^k)=0$ when $k\in \mathcal{S} \backslash{}C_{1+t\alpha_i}$. Hence,   $\mathcal{R}_{n,\lambda}^{(q)}$ has $s+1$ primitive idempotents 
\begin{equation}\label{eq2}
 \varepsilon_i=\frac{1}{n} \sum_{v=0}^{n-1}\sum_{h\in C_{1+t\alpha_i}}\zeta^{-vh}x^v \in \mathbb{F}_{q}[x]	 
\end{equation}
 for $0\leq i\leq s$. For the  quotient ring $\mathbb{F}_{q^m}[x]/ \langle x^n-\lambda \rangle$, there has $n$ primitive idempotents
\begin{eqnarray}\label{eq1}
	  e_{1+tj}=\frac{1}{n} \sum_{v=0}^{n-1}\zeta^{-v(1+tj)}x^v \in \mathbb{F}_{q^m}[x]
\end{eqnarray}
for $0\leq j\leq n-1$.  By \cite[Lemma 2.2]{cao1}, $\mathcal{R}_{n,\lambda}^{(q)}$ is the direct sum of $ \mathcal{I}_i$ for $0\leq i\leq s$, in symbols,
$$ \mathcal{R}_{n,\lambda}^{(q)}= \mathcal{I}_0 \bigoplus  \mathcal{I}_1 \bigoplus \cdots \bigoplus  \mathcal{I}_s.  $$
For each $0\leq i\leq s$, since the irreducible $\lambda$-constacyclic code $\mathcal{I}_i$ can be generated by $\varepsilon_i$,  we have 
\begin{eqnarray*}
	 \mathcal{I}_i=\bigg\{~c(x)\varepsilon_i~\big| ~ c(x)=\sum_{h=0}^{d_i-1}c_hx^h ~{\rm and}~ c_h\in\mathbb{F}_q ~{\rm for}~ 0\leq h\leq d_i-1\bigg\}.
\end{eqnarray*}
From  (\ref{eq2}) and  (\ref{eq1}), we have $\varepsilon_i=\sum_{j\in C_{1+t\alpha_i}} e_j$ for  $0\leq i \leq s$, and   $x^v=\sum_{j=0}^{n-1}\zeta^{v(1+tj)}e_{1+tj}$ for $0\leq v\leq n-1$. Hence, 
\begin{eqnarray}\label{main}
	\mathcal{I}_i=\bigg\{\sum_{j=0}^{d_i-1}(\sum_{h=0}^{d_i-1} c_h\zeta^{h(1+t\alpha_i)q^j})e_{(1+t\alpha_i)q^j}~ \bigg|~ c_h\in\mathbb{F}_q ~{\rm for}~ 0\leq h\leq  d_i-1 \bigg\}.
\end{eqnarray}
Note that for each $0\leq i\leq s$,  the check polynomial  of $\mathcal{I}_i$  is 
	$m_i(x)=\prod_{k\in C_{1+t\alpha_i}}(x-\zeta^{j})$, 
and then  $\mathcal{I}_i$ corresponds to the $q$-cyclotomic coset $	 C_{1+t\alpha_i}=\{1+t\alpha_i, (1+t\alpha_i)q,\cdots,(1+t\alpha_i)q^{d_i-1}\}$.

\subsection{Automorphism group and group actions}
A permutation matrix is a square matrix with exactly one 1 in each row and column and 0s elsewhere. The set of coordinate permutations that map a code $\mathcal{C}$ to itself is  a group, which is referred to as the permutation automorphism group of $\mathcal{C}$ and denoted by ${\rm PAut}(\mathcal{C})$. 

A monomial matrix over $\mathbb{F}_q$ is   a square matrix with exactly one nonzero element of $\mathbb{F}_q$ in each row and column and 0s elsewhere. A monomial matrix $M$ can be written  either in the forms of $PD$ or $D'P$ with a permutation  matrix $P$, and $D$ and $D'$ being diagonal matrices.  The set of monomial matrices that map  a code $\mathcal{C}$ to itself is  a group, which is called  the monomial automorphism group of $\mathcal{C}$ and denoted by ${\rm MAut}(\mathcal{C})$.

 The automorphism group ${\rm Aut}\mathcal{(C)}$ is a set of maps with the  form  of $M\sigma$ that map a code $\mathcal{C}$ to itself, where $M$ is a monomial matrix and $\sigma$ is a field automorphism of $\mathbb{F}_q$. Then we have ${\rm PAut}\mathcal{(C)}\subseteq{\rm MAut}\mathcal{(C)}\subseteq{\rm Aut}\mathcal{(C)}$. All the above groups are the same if the code is binary.

Next, we recall the knowledge about  group actions.	Suppose that a group $G$ acts on a nonempty finite  set $X$. For $x\in X$, $Gx=\{~gx~|g\in G \}$   is called the orbit  of this group action containing $x$ (for short, $G$-orbit). The set of all $G$-orbits is denoted as 
\begin{equation}
	 G \backslash{} X=\{Gx|x\in X\}.
\end{equation}
Evidently, the distinct $G$-orbits form a partition of $X$. Suppose $\mathcal{G}$ is a subgroup of ${\rm Aut}\mathcal{(C)}$, which naturally gives a rise to a group action of $\mathcal{G}$ on $\mathcal{C}^{\ast}=\mathcal{C}\backslash \{0\}$.
The following proposition presented in \cite{chen1} shows that the number of orbits of  $\mathcal{G}$ on  $\mathcal{C}^\ast$ is a upper bound for the number of nonzero weights of $\mathcal{C}$, with equality if and only if any two codewords with the same weight are in the same orbit.

\begin{prop}\cite[Proposition \uppercase\expandafter{\romannumeral2}.2]{chen1}\label{pro1}
	 Let $\mathcal{C}$ be a linear code of length $n$ over $\mathbb{F}_q$
	 with $l$ nonzero weights and let ${\rm Aut}\mathcal{(C)}$ be the automorphism
	 group of $\mathcal{C}$. Suppose that $\mathcal{G}$ is a subgroup of ${\rm Aut}\mathcal{(C)}$. If the
	 number of orbits of $\mathcal{G}$ on $\mathcal{C}^{\ast}=\mathcal{C}\backslash \{0\}$ is equal
	 to $N$, then $l\leq N$. Moreover, the equality holds if and only if for any two
	 non-zero codewords $c_1,c_2\in\mathcal{C}$ with the same weight, there
	 exists an automorphism $A\in \mathcal{G}$ such that $c_1A=c_2$.
\end{prop}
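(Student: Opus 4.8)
The plan is to lean on the single structural fact that every element of $\mathrm{Aut}(\mathcal{C})$ preserves the Hamming weight, and then to recast the inequality as a comparison between the orbits of $\mathcal{G}$ and the distinct weight values occurring in $\mathcal{C}^{\ast}$. The whole argument is really about a natural surjection from orbits onto nonzero weights, so the skeleton is: (i) weight is constant on each orbit; (ii) this gives a surjection whose target has size $l$; (iii) surjectivity yields $l\leq N$ and injectivity characterizes equality.

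First I would check that the $\mathcal{G}$-action preserves weight. Any $A\in\mathrm{Aut}(\mathcal{C})$ has the form $A=M\sigma$ with $M$ a monomial matrix and $\sigma$ a field automorphism of $\mathbb{F}_q$. A monomial matrix permutes the coordinates and rescales each coordinate by an element of $\mathbb{F}_q^{\ast}$, so it maps zero entries to zero entries and nonzero entries to nonzero entries; a field automorphism acts entrywise, fixes $0$, and maps $\mathbb{F}_q^{\ast}$ to itself, so it likewise leaves the set of nonzero positions unchanged. Hence ${\rm wt}(cA)={\rm wt}(c)$ for every $c\in\mathcal{C}^{\ast}$, and in particular the Hamming weight is constant along each $\mathcal{G}$-orbit. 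This lets me define a map $w\colon \mathcal{G}\backslash\mathcal{C}^{\ast}\to\{\text{nonzero weights of }\mathcal{C}\}$ by $w(\mathcal{G}c)={\rm wt}(c)$, which the previous sentence shows is well defined.

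Since every nonzero weight of $\mathcal{C}$ is attained by some codeword, and that codeword lies in some orbit, the map $w$ is surjective; therefore $l=|\,\mathrm{image}(w)\,|\leq|\mathcal{G}\backslash\mathcal{C}^{\ast}|=N$, which is the asserted inequality. For the equality clause I would note that, $w$ being surjective, $l=N$ holds if and only if $w$ is also injective, i.e. distinct orbits carry distinct weights. Taking the contrapositive, $w$ fails to be injective exactly when two codewords of the same weight fall into different orbits; so $l=N$ if and only if any two nonzero codewords $c_1,c_2$ of equal weight lie in a common orbit, equivalently there exists $A\in\mathcal{G}$ with $c_1A=c_2$. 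I do not expect a genuine obstacle here: the weight-preservation step is immediate from the description of $\mathrm{Aut}(\mathcal{C})$, and the only point requiring a little care is stating the orbit-to-weight correspondence precisely enough that the surjectivity-versus-injectivity dichotomy reads off cleanly.
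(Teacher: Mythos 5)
Your proof is correct. The paper itself offers no proof of this proposition --- it is imported verbatim from \cite{chen1} --- and your argument (automorphisms of the form $M\sigma$ preserve Hamming weight, so weight is constant on $\mathcal{G}$-orbits and the orbit-to-weight map is a well-defined surjection onto the $l$ nonzero weights, with $l=N$ exactly when that surjection is injective, i.e.\ when equal-weight codewords share an orbit) is precisely the standard argument behind the cited result, so nothing is missing.
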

\begin{lem}(Burnside's lemma)\cite[Theorem 2.113]{rotman}\label{lem1}
	 Let $G$ act  on a  finite set $X$. Then the number of orbits of  $G$ on $X$ is equal to 
	 \begin{equation*}
	 	 \frac{1}{|G|} \sum_{g\in G} \big| {\rm Fix}(g) \big|,
	 \end{equation*}
 where ${\rm Fix}(g)=\{x\in X| gx=x\}$.
\end{lem}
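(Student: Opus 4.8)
The plan is to prove Burnside's lemma by the classical double-counting argument applied to the incidence set of fixed points. First I would introduce the set
\[
F=\{(g,x)\in G\times X\mid gx=x\},
\]
and evaluate its cardinality in two different ways by interchanging the order of summation. Summing first over the group elements gives $|F|=\sum_{g\in G}|{\rm Fix}(g)|$, directly from the definition of ${\rm Fix}(g)$. Summing instead over the points of $X$ gives $|F|=\sum_{x\in X}|{\rm Stab}(x)|$, where ${\rm Stab}(x)=\{g\in G\mid gx=x\}$ is the stabiliser of $x$.

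The key step is to convert stabiliser sizes into orbit sizes. By the orbit-stabiliser theorem, ${\rm Stab}(x)$ is a subgroup of $G$ and the assignment $g\,{\rm Stab}(x)\mapsto gx$ defines a bijection from the set of left cosets of ${\rm Stab}(x)$ onto the orbit $Gx$; hence $|{\rm Stab}(x)|=|G|/|Gx|$ for every $x\in X$. Substituting this into the second count yields
\[
|F|=|G|\sum_{x\in X}\frac{1}{|Gx|}.
\]

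Finally I would regroup this sum according to the orbit partition of $X$. Since the distinct orbits are disjoint and cover $X$, and since $|Gx|$ is constant and equal to $|O|$ as $x$ ranges over a fixed orbit $O$, each orbit contributes $\sum_{x\in O}1/|O|=1$ to the sum $\sum_{x\in X}1/|Gx|$. Therefore this sum equals the total number $N$ of orbits, so that $|F|=|G|\cdot N$. Comparing the two evaluations of $|F|$ gives $\sum_{g\in G}|{\rm Fix}(g)|=|G|\cdot N$, and dividing through by $|G|$ produces the asserted formula. The only nontrivial ingredient is the orbit-stabiliser correspondence; once that is available the remainder is merely a reorganisation of a finite double sum, so I expect no genuine obstacle beyond invoking that standard fact correctly.
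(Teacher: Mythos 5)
Your proof is correct: the double count of the incidence set $F=\{(g,x)\mid gx=x\}$, the orbit--stabiliser identity $|{\rm Stab}(x)|=|G|/|Gx|$, and the regrouping of $\sum_{x\in X}1/|Gx|$ over orbits constitute the classical argument for Burnside's lemma. The paper itself gives no proof --- it cites the result from Rotman's textbook --- and your argument coincides with that standard proof (note only that your use of the orbit--stabiliser theorem implicitly requires $G$ to be finite, which is forced anyway by the formula $\frac{1}{|G|}\sum_{g\in G}|{\rm Fix}(g)|$ in the statement).
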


\begin{lem}\cite[Lemma \uppercase\expandafter{\romannumeral2}.3]{chen1}\label{lem4}
	 Let $G$ be  a finite group acting on a nonzero  finite  set $X$ and let $N$ be a normal subgroup of $G$. It is clear that $N$ naturally acts on $X$. Suppose the set of $N$-orbits is denoted by $N \backslash{} X=\{ Nx|x\in X\}$. Then the quotient group $G/N$ acts on $N \backslash{} X$ and the number of orbits of $G$ on $X$ is equal to the number of orbits  of  $G/N$  on $N \backslash{} X$.
\end{lem}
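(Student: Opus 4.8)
The plan is to make explicit an action of the quotient group $G/N$ on $N\backslash X$ and then to produce a bijection between the $G$-orbits on $X$ and the $(G/N)$-orbits on $N\backslash X$. For a coset $gN\in G/N$ and an $N$-orbit $Nx\in N\backslash X$, I would set
\[ (gN)\cdot(Nx)=N(gx). \]
The first and most delicate task is to check that this prescription is well defined, that is, independent of the chosen representatives $g$ and $x$. Suppose $gN=g'N$ and $Nx=Nx'$, so that $g'=gn_1$ and $x'=n_2x$ for some $n_1,n_2\in N$. Then $g'x'=g(n_1n_2)x$, and since $N$ is normal in $G$ the element $gn_1n_2g^{-1}$ lies in $N$; writing $g(n_1n_2)x=\big(gn_1n_2g^{-1}\big)gx$ exhibits $g'x'$ and $gx$ in one and the same $N$-orbit, whence $N(g'x')=N(gx)$. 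This is the unique place where the normality hypothesis is genuinely used, and I expect it to be the crux of the whole argument; everything else is routine. Once well-definedness is secured, the group-action axioms follow immediately, since $(eN)\cdot(Nx)=Nx$ and $(gN)\cdot\big((hN)\cdot(Nx)\big)=N(ghx)=\big((gh)N\big)\cdot(Nx)$.

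Next I would define a map $\Phi$ from the set of $G$-orbits on $X$ to the set of $(G/N)$-orbits on $N\backslash X$ by sending the orbit $Gx$ to the $(G/N)$-orbit of $Nx$. To see that $\Phi$ is well defined, note that if $Gx=Gx'$ then $x'=gx$ for some $g\in G$, so $Nx'=(gN)\cdot(Nx)$ shows that $Nx$ and $Nx'$ lie in the same $(G/N)$-orbit. Surjectivity is clear, because every point of $N\backslash X$ has the form $Nx$ for some $x\in X$ and thus is hit by $Gx$. For injectivity, suppose $Nx$ and $Nx'$ lie in a common $(G/N)$-orbit, so $Nx'=(gN)\cdot(Nx)=N(gx)$ for some $g\in G$; then $x'=n(gx)=(ng)x$ for some $n\in N$, giving $x'\in Gx$ and hence $Gx=Gx'$.

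Since $\Phi$ is a bijection, the number of $G$-orbits on $X$ equals the number of $(G/N)$-orbits on $N\backslash X$, which is exactly the assertion of the lemma. The argument is elementary orbit-chasing throughout; the only load-bearing step is the well-definedness of the quotient action, where the conjugation-invariance of $N$ is what allows $g(n_1n_2)x$ to be rewritten as an element of $N$ applied to $gx$.
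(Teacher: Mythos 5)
Your proof is correct, and there is an important caveat about the comparison: this paper never proves the lemma itself---it is quoted directly from \cite{chen1} (Lemma II.3)---so there is no in-paper argument to measure yours against. What you give is the standard proof that such a citation presupposes: define the quotient action $(gN)\cdot(Nx)=N(gx)$, verify it is well defined---your conjugation step $g(n_1n_2)x=\bigl(gn_1n_2g^{-1}\bigr)gx$ is indeed the only place normality is needed, and you correctly flag it as the crux---and then exhibit the bijection $\Phi$ between $G$-orbits on $X$ and $(G/N)$-orbits on $N\backslash X$ by routine orbit-chasing. An equivalent packaging, which some references prefer, is to observe that each $G$-orbit on $X$ is a disjoint union of $N$-orbits and that these $N$-orbits form exactly one $(G/N)$-orbit on $N\backslash X$; that is your $\Phi$ stated in different words. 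One further remark: finiteness of $G$ and $X$, though hypothesized in the lemma, is never used in your argument (nor is it needed); it matters only for the paper's subsequent applications, where the orbit counts are fed into Burnside's lemma.
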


We define two  $\mathbb{F}_q $-linear maps on $ \mathcal{R}_{n,\lambda}^{(q)}$,  cyclic shift $\rho$ and scalar multiplication $\sigma_b$, respectively:
\begin{equation*}
	 \rho:  \mathcal{R}_{n,\lambda}^{(q)}\longrightarrow \mathcal{R}_{n,\lambda}^{(q)}, \\ 
	 \sum_{i=0}^{n-1}a_ix^i \longrightarrow  \rho(\sum_{i=0}^{n-1}a_ix^i)=\sum_{i=0}^{n-1}a_ix^{i+1},
\end{equation*}
and for $b\in \mathbb{F}_q^{\ast}$,
\begin{equation*}
	\sigma_b:  \mathcal{R}_{n,\lambda}^{(q)}\longrightarrow \mathcal{R}_{n,\lambda}^{(q)}, \\ 
	\sum_{i=0}^{n-1}a_ix^i \longrightarrow  \sigma_b(\sum_{i=0}^{n-1}a_ix^i)=\sum_{i=0}^{n-1}ba_ix^{i}.
\end{equation*}
Evidently,   $\rho$ and $\sigma_b$ are  $\mathbb{F}_q$-vector space automorphisms of $\mathcal{R}_{n,\lambda}^{(q)}$ and they both  belong to ${\rm Aut}(\mathcal{C})$ for any $\lambda$-constacyclic code $\mathcal{C}$ over $\mathbb{F}_q$.
For any codeword $c=(c_0,c_1,\cdots,c_{n-1})$ in a $\lambda$-constacyclic code of length $n$ over $\mathbb{F}_q$, we have $\rho(c)=(\lambda c_{n-1},c_0,c_1,\cdots,c_{n-2})$ and $\rho^{tn}(c)=(\lambda ^t c_0,\lambda ^tc_1,\cdots,$ $\lambda ^tc_{n-1})=c$ since $t$ is the order of $\lambda\in \mathbb{F}_q^{\ast}$. Hence, $\langle \rho \rangle$ is a subgroup of ${\rm Aut}(\mathcal{C})$ generated by  $\rho$ with  order $tn$.
 Denote  $M:=\{\sigma_b|b\in\mathbb{F}_q^{\ast}\}$. Then $M$ is a subgroup of  ${\rm Aut}\mathcal{(C)}$ with order $q-1$. Let     $\langle \rho, M\rangle$ denote the   subgroup of  ${\rm Aut}\mathcal{(C)}$ generated by $\rho$ and $M$.
 \section{Main results}
 
 In this section, we will count the numbers  of   orbits of $\langle\rho\rangle$ and $\langle\rho,M\rangle$ on $\mathcal{C}^\ast=\mathcal{C}\backslash\{0\}$, respectively. It is known that any  $\lambda$-constacyclic code is a direct sum of some irreducible  $\lambda$-constacyclic codes. Therefore, we first investigate the group actions on an irreducible  $\lambda$-constacyclic code. 
 Using the results  obtained, we proceeded to study the group actions on a general  $\lambda$-constacyclic code. 
 
 \subsection{ The actions of \texorpdfstring {$\langle \rho\rangle$}{}  on \texorpdfstring {$\mathcal{C}^\ast$}{}}
  By Lemma \ref{lem1}, the number of orbits of $\langle \rho\rangle$ on $\mathcal{C}^{\ast}$ is given by
 \begin{equation}\label{eq3}
 	 N_{\langle\rho\rangle}(\mathcal{C}^\ast) =\big| \langle \rho\rangle \backslash \mathcal{C}^{\ast} \big|=\frac{1}{tn}\sum_{r=1}^{tn} \big| {\rm Fix}(\rho^r) \big|,
 \end{equation}
  where ${\rm Fix}(\rho^r)=\{c \in \mathcal{C}^\ast | \rho^r(c)=c\}$.	
  
 \begin{lem}\label{lem2}
 	 Let $\mathcal{C}$ be an $[n,k]$ irreducible $\lambda$-constacyclic code over $\mathbb{F}_q$ with generating idempotent $\varepsilon_i$ and ${\rm ord}(\lambda)=t$.
 Suppose that   $\varepsilon_i$ corresponds to the $q$-cyclotomic coset $\{1+t\alpha_i, (1+t\alpha_i)q,\cdots,(1+t\alpha_i)q^{k-1}\}$. Then  $N_{\langle\rho\rangle}(\mathcal{C}^\ast)$ is equal to 
 $$\frac{(q^k-1)\gcd(1+t\alpha_i,n)}{tn}
 .$$
 Moreover, the number of nonzero weights of  $\mathcal{C}$ is less than or equal to  $N_{\langle\rho\rangle}(\mathcal{C}^\ast)$, with equality  if and only if for any two
 non-zero codewords $c_1,c_2\in\mathcal{C}$ with the same weight, there
 exists  an integer $i$ such that $\rho^i(c_1)=c_2$.
 \end{lem}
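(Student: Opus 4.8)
The plan is to apply Burnside's lemma (Lemma \ref{lem1}), which in the form already recorded in (\ref{eq3}) reduces the computation of $N_{\langle\rho\rangle}(\mathcal{C}^\ast)$ to evaluating $|\mathrm{Fix}(\rho^r)|$ for each $r$ with $1\le r\le tn$. The conceptual core is to make the action of $\rho$ on the irreducible code $\mathcal{I}_i=\mathcal{C}$ completely transparent. First I would use the $\mathbb{F}_{q^m}$-decomposition (\ref{main}): writing a codeword in terms of the idempotents $e_{(1+t\alpha_i)q^j}$, its component on $e_{(1+t\alpha_i)q^j}$ is $\sum_{h=0}^{k-1}c_h\zeta^{h(1+t\alpha_i)q^j}$. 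Since $\rho$ is just multiplication by $x$ and the idempotents (\ref{eq1}) satisfy $x\cdot e_{1+tj}=\zeta^{1+tj}e_{1+tj}$ (a consequence of the orthogonality of the $e_{1+tj}$ together with the expansion $x^v=\sum_{j'}\zeta^{v(1+tj')}e_{1+tj'}$), the map $\rho^r$ multiplies the $e_{(1+t\alpha_i)q^j}$-component by $\zeta^{r(1+t\alpha_i)q^j}$. Equivalently, evaluation at $\zeta^{1+t\alpha_i}$ identifies $\mathcal{I}_i$ with the field $\mathbb{F}_{q^k}$ (an $\mathbb{F}_q$-isomorphism, since $m_i(x)$ has degree $k$), and under this identification $\rho$ becomes multiplication by the scalar $\zeta^{1+t\alpha_i}$. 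A useful byproduct I would record is that every nonzero codeword of $\mathcal{I}_i$ has all of its $\mathbb{F}_{q^m}$-components nonzero.

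From this description the fixed-point count is immediate. For a nonzero codeword $c$, the equation $\rho^r(c)=c$ forces $\zeta^{r(1+t\alpha_i)q^j}=1$ for every $j$; because $\gcd(q,tn)=1$, each of these conditions is equivalent to $\zeta^{r(1+t\alpha_i)}=1$, i.e. to $tn\mid r(1+t\alpha_i)$, and this condition is independent of which nonzero $c$ we chose. Hence $|\mathrm{Fix}(\rho^r)|=q^k-1$ when $tn\mid r(1+t\alpha_i)$ and $|\mathrm{Fix}(\rho^r)|=0$ otherwise. The next step is the elementary counting: the number of $r\in\{1,\dots,tn\}$ with $tn\mid r(1+t\alpha_i)$ equals the number of multiples of $\tfrac{tn}{g}$ in that range, where $g=\gcd(1+t\alpha_i,tn)$, and there are exactly $g$ of them. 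Finally, since $1+t\alpha_i\equiv 1\pmod t$ gives $\gcd(1+t\alpha_i,t)=1$ and therefore $\gcd(1+t\alpha_i,tn)=\gcd(1+t\alpha_i,n)$, substituting into (\ref{eq3}) yields
$$N_{\langle\rho\rangle}(\mathcal{C}^\ast)=\frac{1}{tn}\cdot\gcd(1+t\alpha_i,n)\cdot(q^k-1)=\frac{(q^k-1)\gcd(1+t\alpha_i,n)}{tn}.$$

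The ``moreover'' clause then follows with no extra work by invoking Proposition \ref{pro1} with $\mathcal{G}=\langle\rho\rangle$: the number of nonzero weights is bounded above by the number of orbits $N_{\langle\rho\rangle}(\mathcal{C}^\ast)$, with equality exactly when any two equal-weight nonzero codewords lie in one $\langle\rho\rangle$-orbit, i.e. when $\rho^i(c_1)=c_2$ for some integer $i$. The main obstacle will be pinning down the $\rho$-action cleanly; once one sees that $\rho$ is simply multiplication by the root-of-unity scalar $\zeta^{1+t\alpha_i}$ on the field $\mathbb{F}_{q^k}\cong\mathcal{I}_i$, the dichotomy for $|\mathrm{Fix}(\rho^r)|$ and the gcd bookkeeping are routine. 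I would take care to justify the two potentially slippery points: that the evaluation map is genuinely a bijection onto $\mathbb{F}_{q^k}$ (so that all nonzero codewords have nonzero components and the ``all-or-nothing'' fixed-point behavior is valid), and that removing the factor $t$ from the gcd is legitimate because $1+t\alpha_i$ is coprime to $t$.
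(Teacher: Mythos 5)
Your proposal is correct and follows essentially the same route as the paper's proof: Burnside's lemma via (\ref{eq3}), the idempotent decomposition (\ref{main}) showing that $\rho^r$ scales the $e_{(1+t\alpha_i)q^j}$-component by $\zeta^{r(1+t\alpha_i)q^j}$, the resulting all-or-nothing formula for $|{\rm Fix}(\rho^r)|$, the gcd count, and Proposition \ref{pro1} for the ``moreover'' clause. Your two added justifications --- that nonzero codewords of the irreducible code have all components nonzero, and that $\gcd(1+t\alpha_i,tn)=\gcd(1+t\alpha_i,n)$ since $1+t\alpha_i\equiv 1\pmod t$ --- are points the paper uses implicitly, so they are welcome refinements rather than a different approach.
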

 \begin{proof}
 	 We only   calculate   the number of orbits of $\langle \rho\rangle$ on $\mathcal{C}^{\ast}$, and the remaining statements are from Proposition \ref{pro1}. From Eq. (\ref{eq3}), we intend to determine the values of $\big| {\rm Fix}(\rho^r) \big|$ for $1\leq r\leq tn$. Recall that  any codeword $c(x)$ in $\mathcal{C}^{\ast}$ can be expressed in the form  of 
 	 \begin{equation*}
 	 c(x)=	\sum_{j=0}^{k-1}(\sum_{h=0}^{k-1} c_h\zeta^{h(1+t\alpha_i)q^j})e_{(1+t\alpha_i)q^j}\in \mathcal{C}^{\ast},
 	 \end{equation*}
  where  $c_h\in\mathbb{F}_q$.  Since      $ e_{(1+t\alpha_i)q^j}=\frac{1}{n} \sum_{v=0}^{n-1}\zeta^{-v(1+t\alpha_i)q^j}x^v$, we get that 
  	 \begin{align*}
  	 	   \rho^r(e_{(1+t\alpha_i)q^j})&=\frac{1}{n} \sum_{v=0}^{n-1}\zeta^{-v(1+t\alpha_i)q^j}x^{v+r}\\
  	 	   &=\zeta^{r(1+t\alpha_i)q^j}\frac{1}{n} \sum_{v=0}^{n-1}\zeta^{-(v+r)(1+t\alpha_i)q^j}x^{v+r}\\
  	 	   &=\zeta^{r(1+t\alpha_i)q^j} e_{(1+t\alpha_i)q^j},
  	 \end{align*}
where the last equality holds because $\zeta^{-n(1+t\alpha_i)q^j}x^{n}=1$. Then we have 
	 \begin{align*}
\rho^r(	c(x))&=\rho^r\big(	\sum_{j=0}^{k-1}(\sum_{h=0}^{k-1} c_h\zeta^{h(1+t\alpha_i)q^j})e_{(1+t\alpha_i)q^j}\big)\\
&=	\sum_{j=0}^{k-1}(\sum_{h=0}^{k-1} c_h\zeta^{h(1+t\alpha_i)q^j}) \rho^r(e_{(1+t\alpha_i)q^j})\\
&=\sum_{j=0}^{k-1}\zeta^{r(1+t\alpha_i)q^j}(\sum_{h=0}^{k-1} c_h\zeta^{h(1+t\alpha_i)q^j}) e_{(1+t\alpha_i)q^j}.
\end{align*}
It follows that $\rho^r(c(x))=c(x)$ if and only if $\zeta^{r(1+t\alpha_i)q^j}=1$ for $0\leq j\leq k-1$. Since $\zeta$ is a primitive $tn$-th root of unity in $\mathbb{F}_{q^m}^{\ast}$ and $\gcd(tn,q)=1$, we get that  $\zeta^{r(1+t\alpha_i)q^j}=1$ if and only if $\frac{tn}{\gcd(1+t\alpha_i,n)}\mid r$. Therefore,
 \begin{eqnarray*}
\big| {\rm Fix}(\rho^r) \big| & = & \left\{\begin{array}{ll}
	q^k-1 , & \text { if }  \frac{tn}{\gcd(1+t\alpha_i,n)}\mid r, \\
	0, & \text { otherwise}.
	\end{array}\right.
\end{eqnarray*}	 
 By Eq.(\ref{eq3}), we get that the number of  orbits of $\langle \rho\rangle$ on $\mathcal{C}^{\ast}$ is equal  to 
  \begin{align*}
 	\frac{1}{tn}\sum_{r=1}^{tn} \big| {\rm Fix}(\rho^r) \big|
 	=\frac{1}{tn}\sum_{r=1, \atop \frac{tn}{\gcd(1+t\alpha_i,n)}\mid r}^{tn} q^k-1
 	=\frac{q^k-1}{tn}\sum_{r=1, \atop \frac{tn}{\gcd(1+t\alpha_i,n)}\mid r}^{tn} 1
 	=\frac{(q^k-1)\gcd(1+t\alpha_i,n)}{tn}.
 \end{align*} This completes the proof.
 \end{proof}
 \begin{exa}\label{ex1}
 	 Let $q=5, n=18, t=2$ in Lemma \ref{lem2}. All the 5-cyclotomic cosets are as follows:
 	 \begin{equation*}
 	 	 \{1,5,25,17,13,29\},~ \{3,15\},~ \{7,35,31,11,19,23\},~\{9\},~\{21,33\},~\{27\}.
 	 \end{equation*} Let $ \mathcal{I}'$ be the  irreducible $4$-constacyclic code over $\mathbb{F}_5$ corresponding to the 5-cyclotomic coset $ \{3,15\}$. Then by Lemma \ref{lem2}, the number of nonzero weights of $\mathcal{I}' $ is less than or equal to  $N_{\langle\rho\rangle}((\mathcal{I}')^\ast)$, where the latter is  equal to
 \begin{equation}
 	 \frac{(5^2-1)\gcd(3,18)}{2\times18}=2.
 \end{equation}  By using Magma \cite{bosma}, we  get that the weight distribution of
 $\mathcal{I}'$  is  $1+12x^{12}+12x^{18}$. This implies that the bound  given in Lemma \ref{lem2} is  tight.
 \end{exa}

We   proceed to investigate the group action of $\langle\rho\rangle$ on a general $\lambda$-constacyclic code of length $n$ over $\mathbb{F}_q$. 
Let $\gamma_1,\gamma_2,\cdots,\gamma_l$ be positive integers and let    $\beta_{\gamma_1},\beta_{\gamma_2},\cdots,\beta_{\gamma_l}$ be  integers with $0\leq \gamma_1<\gamma_2<\cdots\gamma_l\leq s$. Suppose that an irreducible $\lambda$-constacyclic code $\mathcal{I}_{\beta_{\gamma_i}}$ over $\mathbb{F}_q$ with generating idempotent $\varepsilon_{\beta_{\gamma_i}}$  corresponds to the  $q$-cyclotomic coset $\{1+t\alpha_{\beta_{\gamma_i}}, (1+t\alpha_{\beta_{\gamma_i}})q,\cdots,(1+t\alpha_{\beta_{\gamma_i}})q^{d_{\beta_{\gamma_i}}-1}\}$. We define
\begin{equation*}
	 \mathcal{C}_{\beta_{\gamma_1}\beta_{\gamma_2}\cdots\beta_{\gamma_l}}:= \mathcal{I}_{\beta_{\gamma_1}}\backslash \{0\} \bigoplus\mathcal{I}_{\beta_{\gamma_2}}\backslash \{0\} \bigoplus\cdots \bigoplus\mathcal{I}_{\beta_{\gamma_l}}\backslash \{0\}.
\end{equation*}Then $\langle \rho\rangle$ can act on $\mathcal{C}_{\beta_{\gamma_1}\beta_{\gamma_2}\cdots\beta_{\gamma_l}}$ in the same way as  the action on  $ \mathcal{C}^{\ast}$. 
 \begin{lem}\label{lem3}
 	  With the notations given above, then  $N_{\langle\rho\rangle}(\mathcal{C}_{\beta_{\gamma_1}\beta_{\gamma_2}\cdots\beta_{\gamma_l}})$ is equal  to 
 	  \begin{equation}
 	  	 \frac{\prod_{i=1}^{l}(q^{d_{\beta_{\gamma_i}}}-1)\gcd(1+t\alpha_{\beta_{\gamma_1}},\cdots,1+t\alpha_{\beta_{\gamma_l}},n)}{tn}.
 	  \end{equation}
 \end{lem}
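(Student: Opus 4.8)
The plan is to run the Burnside argument from Lemma \ref{lem2} over the direct sum. By Lemma \ref{lem1},
$$N_{\langle\rho\rangle}(\mathcal{C}_{\beta_{\gamma_1}\beta_{\gamma_2}\cdots\beta_{\gamma_l}})=\frac{1}{tn}\sum_{r=1}^{tn}\big|{\rm Fix}(\rho^r)\big|,\qquad {\rm Fix}(\rho^r)=\{c\in\mathcal{C}_{\beta_{\gamma_1}\beta_{\gamma_2}\cdots\beta_{\gamma_l}}\mid\rho^r(c)=c\},$$
so the whole task reduces to evaluating $|{\rm Fix}(\rho^r)|$ for each $r$. First I would write a generic element as $c=\sum_{i=1}^{l}c^{(i)}$ with $c^{(i)}\in\mathcal{I}_{\beta_{\gamma_i}}\setminus\{0\}$. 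Because the primitive idempotents $e_{(1+t\alpha_{\beta_{\gamma_i}})q^{j}}$ belonging to distinct cyclotomic cosets are linearly independent, the decomposition is direct and $\rho^r$ preserves each summand; hence $\rho^r(c)=c$ if and only if $\rho^r(c^{(i)})=c^{(i)}$ for every $i$.

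The next step imports the eigenvalue computation already done in Lemma \ref{lem2}: on $\mathcal{I}_{\beta_{\gamma_i}}$ the map $\rho^r$ multiplies the $e_{(1+t\alpha_{\beta_{\gamma_i}})q^{j}}$-component by $\zeta^{r(1+t\alpha_{\beta_{\gamma_i}})q^{j}}$, and since $\gcd(tn,q)=1$ the condition $\zeta^{r(1+t\alpha_{\beta_{\gamma_i}})q^{j}}=1$ does not depend on $j$ and is equivalent to $\frac{tn}{\gcd(1+t\alpha_{\beta_{\gamma_i}},n)}\mid r$. Thus, for each $i$, either this divisibility holds and $\rho^r$ fixes all of $\mathcal{I}_{\beta_{\gamma_i}}$, or it fails and the only fixed element of $\mathcal{I}_{\beta_{\gamma_i}}$ is $0$ (by the Vandermonde invertibility already used). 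Since each $c^{(i)}$ must be nonzero, this yields
$$\big|{\rm Fix}(\rho^r)\big|=\begin{cases}\prod_{i=1}^{l}\big(q^{d_{\beta_{\gamma_i}}}-1\big),& L\mid r,\\ 0,&\text{otherwise},\end{cases}\qquad L:={\rm lcm}_{1\le i\le l}\frac{tn}{\gcd(1+t\alpha_{\beta_{\gamma_i}},n)}.$$

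It then remains to count and simplify. Exactly $tn/L$ values of $r$ in $\{1,\dots,tn\}$ are divisible by $L$, so Burnside gives $N_{\langle\rho\rangle}(\mathcal{C}_{\beta_{\gamma_1}\beta_{\gamma_2}\cdots\beta_{\gamma_l}})=\frac{1}{L}\prod_{i=1}^{l}(q^{d_{\beta_{\gamma_i}}}-1)$. The final arithmetical step is to show $tn/L$ equals the stated gcd. Writing $d_i=\gcd(1+t\alpha_{\beta_{\gamma_i}},n)$, each $d_i$ divides $tn$, so the identity ${\rm lcm}(a/d_1,\dots,a/d_l)=a/\gcd(d_1,\dots,d_l)$ (with $a=tn$) gives $L=\frac{tn}{\gcd(d_1,\dots,d_l)}$; combining this with $\gcd(d_1,\dots,d_l)=\gcd(1+t\alpha_{\beta_{\gamma_1}},\dots,1+t\alpha_{\beta_{\gamma_l}},n)$ produces the asserted formula.

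The representation-theoretic heart is inherited directly from Lemma \ref{lem2}, since $\rho^r$ acts diagonally on the idempotent basis and its per-summand fixed-point behavior is already settled. I therefore expect the only delicate point to be the passage from the $l$ simultaneous divisibility conditions to the single modulus $L$ and its rewriting as a gcd, which rests on the two elementary identities for $\gcd$ and ${\rm lcm}$ noted above; care is needed only to verify that each $d_i\mid tn$ so that ${\rm lcm}(a/d_i)=a/\gcd(d_i)$ is applicable.
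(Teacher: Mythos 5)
Your proposal is correct and follows essentially the same route as the paper: Burnside's lemma, reduction to the per-summand fixed-point condition $\frac{tn}{\gcd(1+t\alpha_{\beta_{\gamma_i}},n)}\mid r$ established in Lemma \ref{lem2}, and the count of admissible $r$ via the identity ${\rm lcm}_i\bigl(tn/\gcd(1+t\alpha_{\beta_{\gamma_i}},n)\bigr)=tn/\gcd(1+t\alpha_{\beta_{\gamma_1}},\dots,1+t\alpha_{\beta_{\gamma_l}},n)$, which the paper uses implicitly and you justify explicitly.
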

\begin{proof}
	 Let $c=c_{\beta_{\gamma_1}}+c_{\beta_{\gamma_2}}+\cdots+c_{\beta_{\gamma_l}}\in  \mathcal{C}_{\beta_{\gamma_1}\beta_{\gamma_2}\cdots\beta_{\gamma_l}}$, where $c_{\beta_{\gamma_i}}\in \mathcal{I}_{\beta_{\gamma_i}}\backslash\{0\}$ for $i=1,\cdots,l$. From (\ref{main}), we suppose that 
	 \begin{equation*}
	 	 c_{\beta_{\gamma_i}}=\sum_{j=0}^{d_{\beta_{\gamma_i}}-1}(\sum_{h=0}^{d_{\beta_{\gamma_i}}-1} c_{h,\beta_{\gamma_i}}\zeta^{h(1+t\alpha_{\beta_{\gamma_i}})q^j})e_{(1+t\alpha_{\beta_{\gamma_i}})q^j} ~\text{for} ~i=1,\cdots,l.
	 \end{equation*}
 Then we have
 \begin{align*}
 	 \rho^r (c)&=\rho^r( c_{\beta_{\gamma_1}})+\cdots+\rho^r( c_{\beta_{\gamma_l}})\\
 	 &=\sum_{j_1=0}^{d_{\beta_{\gamma_1}}-1}\zeta^{r(1+t\alpha_{\beta_{\gamma_1}})q^{j_1}}(\sum_{h_1=0}^{d_{\beta_{\gamma_1}}-1} c_{h_1,\beta_{\gamma_1}}\zeta^{h_1(1+t\alpha_{\beta_{\gamma_1}})q^{j_1}})e_{(1+t\alpha_{\beta_{\gamma_1}})q^{j_1}}+\cdots\\
 	 &+\sum_{j_l=0}^{d_{\beta_{\gamma_l}}-1}\zeta^{r(1+t\alpha_{\beta_{\gamma_l}})q^{j_l}}(\sum_{h_l=0}^{d_{\beta_{\gamma_l}}-1} c_{h_l,\beta_{\gamma_l}}\zeta^{h_l(1+t\alpha_{\beta_{\gamma_l}})q^{j_l}})e_{(1+t\alpha_{\beta_{\gamma_l}})q^{j_l}}.
 \end{align*}
It follows that 
\begin{align*}
	 \rho^r(c) =c&\Leftrightarrow\rho^r(c_{\beta_{\gamma_1}})+\cdots+\rho^r(c_{\beta_{\gamma_l}})=c_{\beta_{\gamma_1}}+\cdots+c_{\beta_{\gamma_l}}\\
	 &\Leftrightarrow  \rho^r(c_{\beta_{\gamma_i}})=c_{\beta_{\gamma_i}} ~\text{for}~i=1,\cdots,l\\
	 &\Leftrightarrow\zeta^{r(1+t\alpha_{\beta_{\gamma_i}})q^{j_i}}=1 ~\text{for}~i=1,\cdots,l ~\text{and}~j_i=0,1,\cdots,d_{\beta_{\gamma_i}}-1\\
	 &\Leftrightarrow tn\mid r(1+t\alpha_{\beta_{\gamma_i}})q^{j_i} ~\text{for}~i=1,\cdots,l ~\text{and}~j_i=0,1,\cdots,d_{\beta_{\gamma_i}}-1\\
	  &\Leftrightarrow  tn\mid r(1+t\alpha_{\beta_{\gamma_i}}) ~\text{for}~i=1,\cdots,l \\
	   &\Leftrightarrow \frac{tn}{\gcd(1+t\alpha_{\beta_{\gamma_i}},n)} \mid r ~\text{for}~i=1,\cdots,l.
\end{align*}
Thereby, 
 \begin{eqnarray*}
	\big| {\rm Fix}(\rho^r) \big| & = & \left\{\begin{array}{ll}
	\prod_{i=1}^l	(q^{d_{\beta_{\gamma_i}}}-1), & \text { if } \frac{tn}{\gcd(1+t\alpha_{\beta_{\gamma_i}},n)} \mid r ~\text{for}~i=1,\cdots,l,\\
		0, & \text { otherwise}.
	\end{array}\right.
\end{eqnarray*}	 
By Eq.(\ref{eq3}), we have 
 \begin{align*}
N_{\langle\rho\rangle}(\mathcal{C}_{\beta_{\gamma_1}\beta_{\gamma_2}\cdots\beta_{\gamma_l}})
	&=\frac{1}{tn}\sum_{r=1}^{tn} \big| {\rm Fix}(\rho^r) \big|\\
	&=	\frac{\prod_{i=1}^l	(q^{d_{\beta_{\gamma_i}}}-1)}{tn} \sum_{r=1, \atop \frac{tn}{\gcd(1+t\alpha_{\beta_{\gamma_i}},n)} \mid r ~\text{for}~ i=1,\cdots,l}^{tn} 1\\
	&=\frac{\prod_{i=1}^l	(q^{d_{\beta_{\gamma_i}}}-1)}{tn} \sum_{r=1, \atop \frac{tn}{\gcd(1+t\alpha_{\beta_{\gamma_1}},\cdots,1+t\alpha_{\beta_{\gamma_l}},n)} \mid r }^{tn} 1\\
	&=  	 \frac{\prod_{i=1}^{l}(q^{d_{\beta_{\gamma_i}}}-1)\gcd(1+t\alpha_{\beta_{\gamma_1}},\cdots,1+t\alpha_{\beta_{\gamma_l}},n)}{tn}.
\end{align*} This completes the proof.
\end{proof}
\begin{thm}\label{thm1}
	 Let $\mathcal{C}$ be a $\lambda$-constacyclic code of length $n$ over $\mathbb{F}_q$  with  ${\rm ord}(\lambda)=t$.
	 Suppose that 
	 \begin{equation}
	 	 \mathcal{C}=\mathcal{I}_{\beta_1} \bigoplus \mathcal{I}_{\beta_2} \bigoplus  \cdots \bigoplus \mathcal{I}_{\beta_u},  
	 \end{equation}
 where $0\leq \beta_1< \beta_2 <\cdots< \beta_u\leq s$, and    $\mathcal{I}_{\beta_i}$   corresponds to the $q$-cyclotomic coset $\{1+t\alpha_{\beta_i}, (1+t\alpha_{\beta_i})q,\cdots,(1+t\alpha_{\beta_i})q^{d_{\beta_i}-1}\}$.
  Then   $N_{\langle\rho\rangle}(\mathcal{C}^\ast)$ is equal  to 
 \begin{equation*}
 	 \sum_{\{\gamma_1,\cdots,\gamma_l\}\in\{1,\cdots,u\}\atop 1\leq \gamma_1<\gamma_2<\cdots<\gamma_l\leq u}   \frac{\prod_{i=1}^{l}(q^{d_{\beta_{\gamma_i}}}-1)\gcd(1+t\alpha_{\beta_{\gamma_1}},\cdots,1+t\alpha_{\beta_{\gamma_l}},n)}{tn}.
 \end{equation*}
 Moreover, the number of nonzero weights of  $\mathcal{C}$ is less than or equal to $N_{\langle\rho\rangle}(\mathcal{C}^\ast)$, with equality  if and only if for any two
non-zero codewords $c_1,c_2\in\mathcal{C}$ with the same weight, there
exists  an integer $i$ such that $\rho^i(c_1)=c_2$.
\end{thm}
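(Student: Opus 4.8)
The plan is to reduce the general case to the already-established Lemma \ref{lem3} by partitioning $\mathcal{C}^\ast$ into $\langle\rho\rangle$-invariant pieces on which the orbit count is known. First I would use the direct-sum decomposition $\mathcal{C}=\mathcal{I}_{\beta_1}\oplus\cdots\oplus\mathcal{I}_{\beta_u}$ to write every codeword uniquely as $c=c_{\beta_1}+\cdots+c_{\beta_u}$ with $c_{\beta_i}\in\mathcal{I}_{\beta_i}$. A codeword lies in $\mathcal{C}^\ast$ precisely when at least one component $c_{\beta_i}$ is nonzero, so I would sort the nonzero codewords according to the index set $\{\gamma_1<\cdots<\gamma_l\}\subseteq\{1,\dots,u\}$ at which the component is nonzero. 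For a fixed nonempty index set, the codewords having exactly that pattern of nonzero components are precisely the elements of $\mathcal{C}_{\beta_{\gamma_1}\cdots\beta_{\gamma_l}}$ defined before Lemma \ref{lem3}, and these subsets are pairwise disjoint with union $\mathcal{C}^\ast$.

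The second step is to observe that each such piece is invariant under $\langle\rho\rangle$. Since every $\mathcal{I}_{\beta_i}$ is itself an ideal of $\mathcal{R}_{n,\lambda}^{(q)}$, it is stabilized by the cyclic shift; equivalently, the computation $\rho^r(e_{(1+t\alpha_i)q^j})=\zeta^{r(1+t\alpha_i)q^j}e_{(1+t\alpha_i)q^j}$ from the proof of Lemma \ref{lem2} shows that $\rho$ acts diagonally on the primitive idempotents spanning $\mathcal{I}_{\beta_i}$ and hence maps $\mathcal{I}_{\beta_i}$ to itself. Because $\rho$ is an automorphism it carries nonzero elements to nonzero elements, so it preserves the pattern of nonzero components of a codeword; therefore $\rho\bigl(\mathcal{C}_{\beta_{\gamma_1}\cdots\beta_{\gamma_l}}\bigr)=\mathcal{C}_{\beta_{\gamma_1}\cdots\beta_{\gamma_l}}$ for every index set, and invariance under $\rho$ gives invariance under all of $\langle\rho\rangle$.

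Since $\mathcal{C}^\ast$ is the disjoint union of these $\langle\rho\rangle$-invariant subsets, every $\langle\rho\rangle$-orbit is contained in exactly one of them, so the number of orbits is additive across the pieces, giving $N_{\langle\rho\rangle}(\mathcal{C}^\ast)=\sum N_{\langle\rho\rangle}(\mathcal{C}_{\beta_{\gamma_1}\cdots\beta_{\gamma_l}})$, the sum running over all subsets $\{\gamma_1,\dots,\gamma_l\}$ of $\{1,\dots,u\}$ with $1\le\gamma_1<\cdots<\gamma_l\le u$. Substituting the value of each summand from Lemma \ref{lem3} yields exactly the claimed formula. The closing statement on the number of nonzero weights is then immediate from Proposition \ref{pro1}, its equality condition being the specialization to $\mathcal{G}=\langle\rho\rangle$. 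I expect the only genuine subtlety to be the invariance claim in the second step, where one must check that $\rho$ never merges codewords having different nonzero-component patterns; the remainder is bookkeeping layered on top of Lemma \ref{lem3}.
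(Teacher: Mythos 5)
Your proposal is correct and follows exactly the paper's own route: partition $\mathcal{C}^\ast$ by nonzero-component pattern into the sets $\mathcal{C}_{\beta_{\gamma_1}\cdots\beta_{\gamma_l}}$, apply Lemma \ref{lem3} to each, sum, and invoke Proposition \ref{pro1} for the weight bound. In fact you supply the details the paper leaves implicit (the $\langle\rho\rangle$-invariance of each piece and the resulting additivity of orbit counts), so your write-up is a more complete version of the same argument.
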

\begin{proof}
	  We only  calculate the number of orbits of $\langle \rho\rangle$ on $\mathcal{C}^\ast$, and the remaining statements are from Proposition \ref{pro1}. By a naturally way, we know that $\mathcal{C}^\ast$  can be decomposed into
	  \begin{equation*}
	  		\bigcup_{\{\gamma_1,\cdots,\gamma_l\}\in\{1,\cdots,u\}\atop 1\leq \gamma_1<\gamma_2<\cdots<\gamma_l\leq u}  \mathcal{C}_{\beta_{\gamma_1}\beta_{\gamma_2}\cdots\beta_{\gamma_l}}.
	  \end{equation*}
	  From Lemma \ref{lem3}, the desired results are obtained.
\end{proof}
\begin{coro}\label{co1}
	 Let $\mathcal{C}=\mathcal{I}_{\beta_1}\bigoplus\mathcal{I}_{\beta_2}$ be a $\lambda$-constacyclic code over $\mathbb{F}_q$,where $0\leq \beta_1< \beta_2 \leq s$, and   $\mathcal{I}_{\beta_i}$ corresponds to the $q$-cyclotomic coset $\{1+t\alpha_{\beta_i},(1+t\alpha_{\beta_i})q,\cdots,(1+t\alpha_{\beta_i})q^{d_{\beta_i}-1}\}$ for $1\leq i\leq 2$. Then  $N_{\langle\rho\rangle}(\mathcal{C}^\ast)$  is equal  to 
	 \begin{equation*}
	  \frac{\prod\limits_{i=1}^{2}(q^{d_{\beta_i}}-1)\gcd(1+t\alpha_{\beta_1},1+t\alpha_{\beta_2},n)}{tn}+ \frac{(q^{d_{\beta_1}}-1)\gcd(1+t\alpha_{\beta_1},n)}{tn}+\frac{(q^{d_{\beta_2}}-1)\gcd(1+t\alpha_{\beta_2},n)}{tn}.
	 \end{equation*} Moreover, the number of nonzero weights of  $\mathcal{C}$ is less than or equal to $N_{\langle\rho\rangle}(\mathcal{C}^\ast)$, with equality  if and only if for any two
 non-zero codewords $c_1,c_2\in\mathcal{C}$ with the same weight, there
exists  an integer $i$ such that $\rho^i(c_1)=c_2$.
\end{coro}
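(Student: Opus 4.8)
The plan is to obtain this statement as the special case $u=2$ of Theorem \ref{thm1}. Indeed, the hypothesis $\mathcal{C}=\mathcal{I}_{\beta_1}\bigoplus\mathcal{I}_{\beta_2}$ with $0\leq \beta_1<\beta_2\leq s$ is exactly the decomposition appearing in Theorem \ref{thm1} when the number of irreducible summands is $u=2$. Thus it suffices to substitute $u=2$ into the general formula and evaluate the resulting sum by listing all of its terms explicitly; no new group-theoretic input is needed beyond what Lemma \ref{lem3} and Theorem \ref{thm1} already supply.

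First I would recall that the sum in Theorem \ref{thm1} ranges over all nonempty subsets $\{\gamma_1,\dots,\gamma_l\}$ of the index set $\{1,\dots,u\}$. For $u=2$ the index set is $\{1,2\}$, which has precisely three nonempty subsets: the two singletons $\{1\}$ and $\{2\}$, and the full set $\{1,2\}$. I would then evaluate the summand for each of these. The subset $\{1\}$ (so $l=1$, $\gamma_1=1$) contributes $\frac{(q^{d_{\beta_1}}-1)\gcd(1+t\alpha_{\beta_1},n)}{tn}$; the subset $\{2\}$ contributes $\frac{(q^{d_{\beta_2}}-1)\gcd(1+t\alpha_{\beta_2},n)}{tn}$; and the subset $\{1,2\}$ (so $l=2$, $\gamma_1=1,\gamma_2=2$) contributes $\frac{(q^{d_{\beta_1}}-1)(q^{d_{\beta_2}}-1)\gcd(1+t\alpha_{\beta_1},1+t\alpha_{\beta_2},n)}{tn}$. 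Adding these three terms reproduces verbatim the claimed expression for $N_{\langle\rho\rangle}(\mathcal{C}^\ast)$.

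Finally, the concluding sentence about the number of nonzero weights is inherited directly from Proposition \ref{pro1}, exactly as in Theorem \ref{thm1}: the orbit count $N_{\langle\rho\rangle}(\mathcal{C}^\ast)$ bounds the number of nonzero weights from above, with equality if and only if any two codewords of equal weight lie in a common $\langle\rho\rangle$-orbit. There is essentially no obstacle here, since the corollary is a pure specialization; the only point requiring (trivial) verification is that the three listed subsets exhaust $\{1,2\}$ without repetition, so that no term of the Theorem \ref{thm1} sum is omitted or double-counted. As this enumeration is immediate for $u=2$, the corollary follows at once.
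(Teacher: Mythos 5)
Your proposal is correct and matches the paper's own proof, which likewise obtains Corollary \ref{co1} by specializing Theorem \ref{thm1} to $u=2$ (the paper simply states ``the desired results are directly obtained by Theorem \ref{thm1}''). Your explicit enumeration of the three nonempty subsets of $\{1,2\}$ and the appeal to Proposition \ref{pro1} for the weight-count statement just spell out the same specialization in more detail.
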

\begin{proof}
	 The desired results are directly obtained by Theorem \ref{thm1}.
\end{proof}

\begin{exa}\label{ex2}
	 Let $q=3,~n=65,~t=2$ in Corollary \ref{co1}. Let $\mathcal{I}'$ and $\mathcal{I}''$ be the irreducible $2$-constacyclic codes over $\mathbb{F}_3$ corresponding to the 3-cyclotomic cosets $\{65\}$ and $\{5,15,45\}$, respectively. By Corollary \ref{co1}, 
	the number of nonzero weights of   $\mathcal{I}'\bigoplus\mathcal{I}''$ is less than or equal to  $N_{\langle\rho\rangle}((\mathcal{I}'\bigoplus\mathcal{I}'')^\ast)$, where  the latter is  equal to
	 \begin{equation*}
	\frac{(3^3-1)(3-1)\gcd(65,65,5)}{2\times65}+\frac{(3^3-1)\gcd(65,5)}{2\times65}+\frac{(3-1)\gcd(65,65)}{2\times65}=4
	 \end{equation*} 
 By using Magma \cite{bosma}, we  get that the weight distribution of
$\mathcal{I}'\bigoplus\mathcal{I}''$ is  $1+26x^{35}+26x^{45}+26x^{50}+2x^{65}$. This implies that the bound  given in Corollary \ref{co1} ( or Theorem \ref{thm1} ) is  tight.
\end{exa}
\subsection{ The actions of \texorpdfstring {$\langle \rho,M\rangle$}{}  on \texorpdfstring {$\mathcal{C}^\ast$}{}}
In this subsection, we discuss the group action of $\langle \rho,M\rangle$ on $\mathcal{C}^\ast$, where
$\langle \rho,M\rangle$ is a subgroup of  ${\rm Aut}\mathcal{(C)}$ generated by $\rho$ and $M=\{\sigma_b|b\in\mathbb{F}_q^{\ast}\}$.  As is easily  checked, $\rho\sigma_b=\sigma_b\rho$ for any $b\in \mathbb{F}_q^{\ast}$.
\begin{lem}\cite{chen1}
	 The subgroup $\langle \rho,M\rangle$   of  ${\rm Aut}\mathcal{(C)}$ is the direst product of $ \langle \rho\rangle$ and $M$, in symbols,
	 $$ \langle \rho,M\rangle=\langle \rho\rangle \times M .$$
	 In particular, $\big|\langle \rho,M\rangle \big|=tn(q-1)$.
\end{lem}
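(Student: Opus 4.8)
The plan is to recognize the assertion as an internal direct product statement and to verify the three standard conditions for it: that $\langle\rho\rangle$ and $M$ together generate $\langle\rho,M\rangle$ as a set of products $\langle\rho\rangle M$, that each factor is normal in the ambient group, and that the two factors meet only in the identity. Once these are in place, the order formula is immediate from $|\langle\rho\rangle M| = |\langle\rho\rangle|\,|M| \big/ |\langle\rho\rangle\cap M|$ together with the orders $|\langle\rho\rangle| = tn$ and $|M| = q-1$ already recorded before the lemma.

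The first two conditions I would dispatch using the commutation relation $\rho\sigma_b = \sigma_b\rho$ noted immediately above the statement. Since $\rho$ commutes with every scalar map and the maps $\sigma_b$ commute among themselves, the group $\langle\rho,M\rangle$ is abelian; hence both $\langle\rho\rangle$ and $M$ are normal, and every element of $\langle\rho,M\rangle$ can be rearranged, by collecting the $\rho$-powers and the scalar together, into the shape $\rho^i\sigma_b$ with $0\le i<tn$ and $b\in\mathbb{F}_q^{\ast}$. This gives $\langle\rho,M\rangle=\langle\rho\rangle M$ and reduces everything to understanding the intersection.

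The heart of the argument is therefore the computation of $\langle\rho\rangle\cap M$, i.e. deciding for which $r$ the shift $\rho^r$ agrees with a scalar map $\sigma_b$. Here I would reuse the eigenvalue computation from the proof of Lemma \ref{lem2}: acting on the primitive idempotents $e_{1+tj}$ of $\mathbb{F}_{q^m}[x]/\langle x^n-\lambda\rangle$ one has $\rho^r(e_{1+tj})=\zeta^{r(1+tj)}e_{1+tj}$, while $\sigma_b(e_{1+tj})=b\,e_{1+tj}$. Thus $\rho^r=\sigma_b$ holds exactly when $\zeta^{r(1+tj)}=b$ for all $j=0,\dots,n-1$ simultaneously. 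Comparing two adjacent indices forces $\zeta^{rt}=1$, and since $\zeta$ is a primitive $tn$-th root of unity this is equivalent to $n\mid r$; writing $r=ns$ and invoking $\zeta^{n}=\lambda$ then identifies the common scalar as $b=\lambda^{s}$. This determines $\langle\rho\rangle\cap M$ completely, and feeding the result into the product-order formula yields the stated cardinality and the direct-product identification.

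I expect this intersection computation to be the genuine obstacle, because it is precisely where the constacyclic twist enters: the relation $\zeta^{n}=\lambda$ links the cyclic shift to a scalar, so one must track carefully which powers of $\rho$ collapse onto $M$ before one can read off $|\langle\rho,M\rangle|$. By contrast the commutativity, normality, and generation steps are routine consequences of the single relation $\rho\sigma_b=\sigma_b\rho$ supplied in the text, and I would treat them briefly.
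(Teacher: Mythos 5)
Your framework (internal direct product via generation, commutativity/normality, and trivial intersection) is the right one, and your eigenvalue computation of the intersection is correct: $\rho^r$ agrees with a scalar map exactly when $n \mid r$, and writing $r=ns$ one gets $\rho^{ns}=\sigma_{\lambda^{s}}$ because $\zeta^{n}=\lambda$. But your final sentence is a non sequitur, and it is precisely where the argument fails. What your computation shows is that $\rho^{n}=\sigma_{\lambda}$, so
$$\langle\rho\rangle \cap M=\{\sigma_{\lambda^{s}} \mid s=0,1,\dots,t-1\}=\langle\sigma_{\lambda}\rangle,$$
a subgroup of order $t={\rm ord}(\lambda)$, which is \emph{nontrivial} whenever $t>1$. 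Feeding this into the product-order formula gives
$$\big|\langle\rho,M\rangle\big|=\big|\langle\rho\rangle M\big|=\frac{|\langle\rho\rangle|\,|M|}{|\langle\rho\rangle\cap M|}=\frac{tn(q-1)}{t}=n(q-1),$$
not $tn(q-1)$, and the product $\langle\rho\rangle M$ is an internal direct product if and only if $t=1$. A concrete check: take $q=3$, $n=2$, $\lambda=-1$, $t=2$; then $\rho$ has order $4$ and $\rho^{2}=\sigma_{-1}$, so $M\subseteq\langle\rho\rangle$ and $\langle\rho,M\rangle=\langle\rho\rangle$ has order $4=n(q-1)$, not $8=tn(q-1)$.

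Carried to its honest conclusion, your (correct) computation therefore disproves the lemma as stated here rather than proving it; the gap is that you never checked whether the intersection you determined is trivial. To be fair, the paper offers no proof to compare against: the lemma is quoted from \cite{chen1}, where it is stated for cyclic codes ($\lambda=1$, $t=1$) and is true, and transplanting it verbatim to the constacyclic setting with the order $tn(q-1)$ overlooks exactly the relation $\rho^{n}=\sigma_{\lambda}$ that you uncovered. The damage is confined to this lemma: the proofs of Lemma \ref{lem5}, Lemma \ref{lem7} and Theorem \ref{thm2} use only the normality of $\langle\rho\rangle$ in $\langle\rho,M\rangle$ (immediate from $\rho\sigma_b=\sigma_b\rho$), Lemma \ref{lem4}, and Burnside's lemma applied to the induced action of $M$ on $\langle\rho\rangle\backslash\mathcal{C}^{\ast}$, none of which needs the direct-product claim or the order $tn(q-1)$. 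The statement your argument actually establishes, and the one that should replace the lemma in this setting, is: $\langle\rho,M\rangle=\langle\rho\rangle M$ with $\langle\rho\rangle\cap M=\langle\sigma_{\lambda}\rangle$ and $\big|\langle\rho,M\rangle\big|=n(q-1)$, the product being direct precisely when $t=1$.
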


\begin{lem}\label{lem5}
	 Let $\mathcal{C}$ be an $[n,k]$ irreducible $\lambda$-constacyclic code over $\mathbb{F}_q$ with generating idempotent $\varepsilon_i$ and ${\rm ord}(\lambda)=t$.
	Suppose that $\varepsilon_i$ corresponds to the $q$-cyclotomic coset $\{1+t\alpha_i, (1+t\alpha_i)q,\cdots,(1+t\alpha_i)q^{d_i-1}\}$. Then   $N_{\langle\rho,M\rangle}(\mathcal{C}^\ast)$ is equal to 
	$$\frac{(q^k-1)\gcd((q-1)(1+t\alpha_i),tn)}{(q-1)tn}.$$
	Moreover, the number of nonzero weights of  $\mathcal{C}$ is less than or equal to   $N_{\langle\rho,M\rangle}(\mathcal{C}^\ast)$, with equality  if and only if for any two
	non-zero codewords $c_1,c_2\in\mathcal{C}$ with the same weight, there
	exist an  integer $i$ and an element $b\in \mathbb{F}_q^{\ast}$ such that $\rho^i(bc_1)=c_2$. 
\end{lem}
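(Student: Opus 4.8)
The plan is to apply Burnside's lemma (Lemma \ref{lem1}) to the group $\langle\rho,M\rangle=\langle\rho\rangle\times M$, which has order $tn(q-1)$. Every element is uniquely of the form $\rho^r\sigma_b$ with $1\le r\le tn$ and $b\in\mathbb{F}_q^\ast$, so that
$$N_{\langle\rho,M\rangle}(\mathcal{C}^\ast)=\frac{1}{tn(q-1)}\sum_{r=1}^{tn}\sum_{b\in\mathbb{F}_q^\ast}\big|{\rm Fix}(\rho^r\sigma_b)\big|.$$
First I would compute $\big|{\rm Fix}(\rho^r\sigma_b)\big|$. Writing a codeword $c\in\mathcal{C}^\ast$ in the idempotent form of (\ref{main}) as $c=\sum_{j=0}^{k-1}a_je_{(1+t\alpha_i)q^j}$ with $a_j=\sum_{h=0}^{k-1}c_h\zeta^{h(1+t\alpha_i)q^j}$, and reusing $\rho^r(e_{(1+t\alpha_i)q^j})=\zeta^{r(1+t\alpha_i)q^j}e_{(1+t\alpha_i)q^j}$ from the proof of Lemma \ref{lem2} together with $\sigma_b(c)=bc$, I obtain $\rho^r\sigma_b(c)=\sum_{j=0}^{k-1}b\zeta^{r(1+t\alpha_i)q^j}a_je_{(1+t\alpha_i)q^j}$. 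Hence $\rho^r\sigma_b(c)=c$ if and only if $a_j\big(b\zeta^{r(1+t\alpha_i)q^j}-1\big)=0$ for every $j$.

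The decisive step is to observe that this fixing condition is the \emph{same} for every nonzero codeword. Since $c_h\in\mathbb{F}_q$, one checks the Frobenius relation $a_{j+1}=a_j^{q}$, so all the transform coordinates $a_0,\dots,a_{k-1}$ are Galois conjugates; in particular $c\ne 0$ forces $a_0\ne 0$ and hence every $a_j\ne 0$. Therefore $\rho^r\sigma_b$ fixes a nonzero $c$ exactly when $b\zeta^{r(1+t\alpha_i)q^j}=1$ for all $j$, and because $b\in\mathbb{F}_q$ this system collapses to the single relation $b\zeta^{r(1+t\alpha_i)}=1$: if the latter holds then $\zeta^{r(1+t\alpha_i)}=b^{-1}\in\mathbb{F}_q$, whence $\zeta^{r(1+t\alpha_i)q^j}=b^{-q^j}=b^{-1}$ for all $j$. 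Consequently $\big|{\rm Fix}(\rho^r\sigma_b)\big|$ equals $q^k-1$ when $b\zeta^{r(1+t\alpha_i)}=1$ and $0$ otherwise, independent of $c$, exactly mirroring the dichotomy in Lemma \ref{lem2}.

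It then remains to count the pairs $(r,b)$ with $b\zeta^{r(1+t\alpha_i)}=1$. For each $r$ the value $b=\zeta^{-r(1+t\alpha_i)}$ is forced, and it lies in $\mathbb{F}_q^\ast$ precisely when $\zeta^{-r(1+t\alpha_i)(q-1)}=1$, i.e. when $tn\mid r(q-1)(1+t\alpha_i)$; the number of such $r$ in $\{1,\dots,tn\}$ is $\gcd\big((q-1)(1+t\alpha_i),tn\big)$. Substituting into the Burnside sum yields
$$N_{\langle\rho,M\rangle}(\mathcal{C}^\ast)=\frac{(q^k-1)\gcd\big((q-1)(1+t\alpha_i),tn\big)}{(q-1)tn},$$
as claimed, and the final ``moreover'' assertion follows from Proposition \ref{pro1}, since the $\langle\rho,M\rangle$-orbit of $c$ is $\{\rho^i(bc)\mid 0\le i<tn,\ b\in\mathbb{F}_q^\ast\}$. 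I expect the main obstacle to be the decisive step: verifying that all transform coordinates of a nonzero codeword are nonzero and that the membership $b\in\mathbb{F}_q$ reduces the $k$ conditions to the single one at $j=0$. This is precisely what makes $\big|{\rm Fix}(\rho^r\sigma_b)\big|$ codeword-independent and lets Burnside's lemma close the count.
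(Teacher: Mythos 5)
Your proof is correct, and it takes a genuinely different route from the paper's. The paper proceeds in two steps: it invokes Lemma~\ref{lem4} to replace the action of $\langle\rho,M\rangle$ on $\mathcal{C}^\ast$ by the action of $M\cong\langle\rho,M\rangle/\langle\rho\rangle$ on the set of $\langle\rho\rangle$-orbits, then applies Burnside's lemma to $M$ alone; the fixed-orbit condition (existence of $z$ with $\rho^z(c)=\xi^r c$) is resolved by a cyclic-subgroup containment argument inside $\mathbb{F}_{q^m}^\ast$ (comparing orders of powers of a primitive element $\omega$), giving the criterion $\frac{q-1}{\gcd(r,q-1)}\mid\frac{tn}{\gcd(1+t\alpha_i,n)}$, and the count is finished with a divisor sum over Euler's $\varphi$, reusing $N_{\langle\rho\rangle}(\mathcal{C}^\ast)$ from Lemma~\ref{lem2}. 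You instead apply Burnside's lemma once, directly to the full product group $\langle\rho\rangle\times M$ of order $tn(q-1)$, and compute $\big|{\rm Fix}(\rho^r\sigma_b)\big|$ element by element; your key observation, that the transform coordinates satisfy $a_{j+1}=a_j^q$ so a nonzero codeword has all $a_j\neq 0$ and the $k$ fixing conditions collapse (via $b\in\mathbb{F}_q$) to the single relation $b\zeta^{r(1+t\alpha_i)}=1$, makes the fixed-point set codeword-independent and reduces everything to counting pairs $(r,b)$, which is elementary. Your route buys self-containedness: it avoids Lemma~\ref{lem4}, the primitive-element order-divisibility argument, and the $\varphi$-sum, and it produces $\gcd\big((q-1)(1+t\alpha_i),tn\big)$ in final form rather than as $\gcd(1+t\alpha_i,n)\gcd\big(q-1,\tfrac{tn}{\gcd(1+t\alpha_i,n)}\big)$. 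What the paper's two-step structure buys is reusability: the fixed-orbit criterion it derives is exactly what is quoted again in the proof of Lemma~\ref{lem7} for reducible codes, where the quotient-action framework lets the irreducible computation carry over componentwise; if you wanted to extend your direct method to Theorem~\ref{thm2} you would redo the pair count with the gcd conditions imposed simultaneously for all components, which is doable but is work the paper's factorization has already packaged.
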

\begin{proof}
	  As is easily  observed, $M$ is isomorphic to the cyclic group $\mathbb{F}_q^{\ast}$. Let $\xi$ be a primitive element of $\mathbb{F}_q^{\ast}$. Then $\sigma_\xi$ is a generator of $M$, i.e., $M=\{\sigma_{\xi}^r|1\leq r\leq q-1\}$. Recall that $\langle \rho \rangle \backslash \mathcal{C}^{\ast}=\{ \langle \rho \rangle(c)|c\in\mathcal{C}^{\ast}\}$ denotes the set of orbits of $\langle \rho \rangle$ on $\mathcal{C}^{\ast}$, where $  \langle \rho \rangle(c)=\{\rho^r(c)|1\leq r\leq tn\}$. Then we naturally give rise to the  action of $M$ on  $\langle \rho \rangle \backslash \mathcal{C}^{\ast}$ as follows:
	   \begin{align*}
	 & M\times  \langle \rho \rangle \backslash \mathcal{C}^{\ast}\longrightarrow  \langle \rho \rangle \backslash \mathcal{C}^{\ast}\\
	 & (\sigma_\xi^r,\langle\rho\rangle(c)) \longrightarrow  \langle \rho \rangle (\xi^rc).
	   \end{align*}
   By Lemma \ref{lem4}, $N_{\langle\rho,M\rangle}(\mathcal{C}^\ast)=N_{M}(\mathcal{\langle \rho \rangle \backslash \mathcal{C}^{\ast}})$.
   Therefore, by Lemma \ref{lem1},
   \begin{align}\label{eq6}
   	N_{\langle\rho,M\rangle}(\mathcal{C}^\ast) = \frac{1}{q-1}\sum_{r=1}^{q-1}\big| {\rm Fix}(\sigma_{\xi}^{r}) \big|,
   \end{align}where ${\rm Fix} (\sigma_{\xi}^{r}) =\{  \langle \rho \rangle(c)\in  \langle \rho \rangle \backslash\mathcal{C}^{\ast}|  \langle \rho \rangle(\xi^rc)= \langle \rho \rangle(c)  \}$.  Next, we intend to calculate  the value of  $\big|{\rm Fix} (\sigma_{\xi}^{r})\big|$ for $1\leq r\leq q-1$.    Recall that  any codeword $c(x)$ in $\mathcal{C}^{\ast}$ has the  form of
\begin{equation}\label{eq5}
c(x)=	\sum_{j=0}^{k-1}(\sum_{h=0}^{k-1} c_h\zeta^{h(1+t\alpha_i)q^j})e_{(1+t\alpha_i)q^j}\in \mathcal{C}^{\ast},
\end{equation}
where  $c_h\in\mathbb{F}_q$. The condition $ \langle \rho \rangle(\xi^rc)= \langle \rho \rangle(c) $  is equivalent to requiring that there exists an integer $z$ such that $\rho^z(c)=\xi^rc$. By Eq. (\ref{eq5}) and the proof of Lemma \ref{lem2}, we have
\begin{equation*}
	\rho^z c(x)=	\sum_{j=0}^{k-1}\zeta^{ z(1+t\alpha_i)q^j} (\sum_{h=0}^{k-1} c_h\zeta^{h(1+t\alpha_i)q^j})e_{(1+t\alpha_i)q^j} ~\text{ and }~
	  \xi^r c(x)=	\sum_{j=0}^{k-1}\xi^r (\sum_{h=0}^{k-1} c_h\zeta^{h(1+t\alpha_i)q^j})e_{(1+t\alpha_i)q^j} 
\end{equation*}
Thereby, $	\rho^z c(x)= 	  \xi^r c(x)$  if and only if $\zeta^{ z(1+t\alpha_i)q^j}=\xi^r$ for $0\leq j\leq k-1$. Since $\xi\in \mathbb{F}_q^{\ast}$, $\zeta^{ z(1+t\alpha_i)q^j}=\xi^r$ for $0\leq j\leq k-1$ if and only if  $\zeta^{ z(1+t\alpha_i)}=\xi^r$.

Let $\omega$ be a primitive element  of $\mathbb{F}_{q^m}^{\ast}$. Recall that $\zeta$ is a primitive $tn$-th root of unity in $\mathbb{F}_{q^m}^{\ast}$ and $\xi$ is a primitive element of $\mathbb{F}_{q}^{\ast}$. Set $\zeta=\omega^{\frac{q^m-1}{tn}}$ and $\xi=\omega^{\frac{q^m-1}{q-1}}$. Then we have 
\begin{align*}
	  \zeta^{ z(1+t\alpha_i)}=\xi^r &\Leftrightarrow  \omega^{\frac{q^m-1}{tn}(1+t\alpha_i)z} =\omega^{\frac{q^m-1}{q-1}r}\\
	  &
	  \Leftrightarrow  \big\langle \omega^{\frac{q^m-1}{q-1}r} \big\rangle  \subseteq \big\langle \omega^{\frac{q^m-1}{tn}(1+t\alpha_i)}\big\rangle \\
	  &\Leftrightarrow {\rm ord}( \omega^{\frac{q^m-1}{q-1}r}  )\mid  {\rm ord} (\omega^{\frac{q^m-1}{tn}(1+t\alpha_i)})\\
	  &\Leftrightarrow \gcd\big(q^m-1, \frac{q^m-1}{tn}(1+t\alpha_i)\big ) \mid \gcd (q^m-1,\frac{q^m-1}{q-1}r)  \\
	  &\Leftrightarrow \frac{q^m-1}{tn}\gcd (1+t\alpha_i,n)\mid  \frac{(q^m-1)}{q-1}\gcd(r,q-1)\\
	  &\Leftrightarrow \frac{q-1}{\gcd(r,q-1)}\mid \frac{tn}{\gcd (1+t\alpha_i,n)},
\end{align*}
where  $\big\langle\omega^{\frac{q^m-1}{q-1}r} \big\rangle$ and $ \big\langle \omega^{\frac{q^m-1}{tn}(1+t\alpha_i)}\big\rangle$ denote the cyclic subgroups  of $\mathbb{F}_{q^m}^{\ast}$ generated by  $\omega^{\frac{q^m-1}{q-1}r}$ and $\omega^{\frac{q^m-1}{tn}(1+t\alpha_i)} $, respectively. It follows that  $ \langle \rho \rangle(\xi^rc)= \langle \rho \rangle(c) $ if and only if $\frac{q-1}{\gcd(r,q-1)}\mid \frac{tn}{\gcd (1+t\alpha_i,n)}$. By Lemma \ref{lem2}, $N_{\langle\rho\rangle}(\mathcal{C}^\ast)=\big|\langle \rho\rangle\backslash \mathcal{C}^{\ast}\big|=\frac{(q^k-1)\gcd(1+t\alpha_i,n)}{tn}$, then we have 
 \begin{eqnarray*}
	\big| {\rm Fix}(\sigma_{\xi}^r) \big| & = & \left\{\begin{array}{ll}
	 \frac{(q^k-1)\gcd(1+t\alpha_i,n)}{tn}, & \text { if } \frac{q-1}{\gcd(r,q-1)}\mid \frac{tn}{\gcd (1+t\alpha_i,n)}, \\
		0, & \text { otherwise}.
	\end{array}\right.
\end{eqnarray*}	By Eq. (\ref{eq6}), we get that 
\begin{align*}
	 N_{\langle\rho,M\rangle}(\mathcal{C}^\ast)   &=\frac{1}{q-1}\sum_{r=1}^{q-1}\big| {\rm Fix}(\sigma_{\xi}^{r}) \big| \\
	    	 &=\frac{(q^k-1)\gcd(1+t\alpha_i,n)}{(q-1)tn} \sum_{r=1,\atop \frac{q-1}{\gcd(r,q-1)}| \frac{tn}{\gcd (1+t\alpha_i,n)}}^{q-1}1\\
	    	 	 &=\frac{(q^k-1)\gcd(1+t\alpha_i,n)}{(q-1)tn} \sum_{\frac{q-1}{g}\mid q-1,\atop \frac{q-1}{g}| \frac{tn}{\gcd (1+t\alpha_i,n)}}\varphi(\frac{q-1}{g})\\ 
	    	 	 &=\frac{(q^k-1)\gcd(1+t\alpha_i,n)}{(q-1)tn}\gcd\big(q-1,\frac{tn}{\gcd (1+t\alpha_i,n)} \big)\\
	    	 	 &= \frac{(q^k-1)\gcd\big((q-1)(1+t\alpha_i),tn\big)}{(q-1)tn}.
\end{align*}Note that for $g\mid q-1$, the number of $r $ satisfying  $1\leq r \leq q-1$ and $\gcd(r,q-1)=g$ is equal to $\varphi(\frac{q-1}{g})$, where $\varphi$ is the Euler Phi function.    This completes the proof.
\end{proof}

 \begin{exa}\label{ex3}
	Let $q=7, n=32, t=3$ in Lemma \ref{lem5}. Let $\mathcal{I}'$  be the irreducible $2$-constacyclic code over $\mathbb{F}_7$ corresponding to the 7-cyclotomic coset $\{10,70\}$. Then 
	the number of nonzero weights of   $\mathcal{I}'$ is less than or equal to    $N_{\langle\rho,M\rangle}((\mathcal{I}')^\ast)$, where  the latter is  equal to
	\begin{equation*}
	\frac{(49-1)\gcd(6\times10,96)}{6\times3\times32}=1
	\end{equation*} Hence,  $\mathcal{I}'$ must be an  one weight code. 
	By using Magma \cite{bosma}, we  get  that the weight distribution of
	$\mathcal{I}'$ is  $1+48x^{28}$. This implies that the bound  given in Lemma \ref{lem5}  is  tight.
\end{exa}

We   proceed to investigate the group action of $\langle\rho,M\rangle$ on a general $\lambda$-constacyclic code of length $n$ over $\mathbb{F}_q$.
For convenience,  we recall some  defined notions.
Let $\gamma_1,\gamma_2,\cdots,\gamma_l$ be positive integers and let    $\beta_{\gamma_1},\beta_{\gamma_2},\cdots,\beta_{\gamma_l}$ be  integers with $0\leq \gamma_1<\gamma_2<\cdots,\gamma_l\leq s$. Suppose an irreducible $\lambda$-constacyclic code $\mathcal{I}_{\beta_{\gamma_i}}$ over $\mathbb{F}_q$ with generating idempotent $\varepsilon_{\beta_{\gamma_i}}$  corresponds to the $q$-cyclotomic coset $\{1+t\alpha_{\beta_{\gamma_i}}, (1+t\alpha_{\beta_{\gamma_i}})q,\cdots,(1+t\alpha_{\beta_{\gamma_i}})q^{d_{\beta_{\gamma_i}}-1}\}$. We define
\begin{equation*}
		\mathcal{C}_{\beta_{\gamma_1}\beta_{\gamma_2}\cdots\beta_{\gamma_l}}:= \mathcal{I}_{\beta_{\gamma_1}}\backslash \{0\} \bigoplus\mathcal{I}_{\beta_{\gamma_2}}\backslash \{0\} \bigoplus\cdots \bigoplus\mathcal{I}_{\beta_{\gamma_l}}\backslash \{0\}.
\end{equation*}Then we can get the number of orbits of $\langle\rho,M\rangle$ on $		\mathcal{C}_{\beta_{\gamma_1}\beta_{\gamma_2}\cdots\beta_{\gamma_l}}$. 
\begin{lem}\label{lem7}
	 With the notations given above,  $N_{\langle\rho,M\rangle}(\mathcal{C}_{\beta_{\gamma_1}\beta_{\gamma_2}\cdots\beta_{\gamma_l}} )$  is equal to 
	 \begin{equation*}
	 	 \frac{\prod_{i=1}^{l}(q^{d_{\beta{\gamma_i}}}-1)}{(q-1)tn}\gcd(1+t\alpha_{\beta{\gamma_1}},\cdots,1+t\alpha_{\beta{\gamma_l}},n)\gcd(q-1,\frac{tn}{\gcd(1+t\alpha_{\beta{\gamma_1}},n)},\cdots,\frac{tn}{\gcd(1+t\alpha_{\beta{\gamma_l}},n)}).
	 \end{equation*}
\end{lem}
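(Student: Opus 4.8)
The plan is to imitate the proof of Lemma \ref{lem5}, upgrading the single-summand computation to the direct sum by means of the decomposition used in Lemma \ref{lem3}. Write $a_i:=1+t\alpha_{\beta_{\gamma_i}}$ for brevity. Since $\langle\rho,M\rangle=\langle\rho\rangle\times M$, the subgroup $\langle\rho\rangle$ is normal, so Lemma \ref{lem4} gives $N_{\langle\rho,M\rangle}(\mathcal{C}_{\beta_{\gamma_1}\cdots\beta_{\gamma_l}})=N_{M}(\langle\rho\rangle\backslash\mathcal{C}_{\beta_{\gamma_1}\cdots\beta_{\gamma_l}})$. Writing $M=\langle\sigma_\xi\rangle$ with $\xi$ a primitive element of $\mathbb{F}_q^\ast$ and applying Burnside's lemma (Lemma \ref{lem1}) to the induced action of $M$ on the orbit set, I would reduce the count to $N_{\langle\rho,M\rangle}(\mathcal{C}_{\beta_{\gamma_1}\cdots\beta_{\gamma_l}})=\frac{1}{q-1}\sum_{r=1}^{q-1}\big|{\rm Fix}(\sigma_\xi^r)\big|$, where ${\rm Fix}(\sigma_\xi^r)=\{\langle\rho\rangle(c):\langle\rho\rangle(\xi^r c)=\langle\rho\rangle(c)\}$.

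Next I would evaluate $\big|{\rm Fix}(\sigma_\xi^r)\big|$. Fix an arbitrary $c=c_{\beta_{\gamma_1}}+\cdots+c_{\beta_{\gamma_l}}$ with $c_{\beta_{\gamma_i}}\in\mathcal{I}_{\beta_{\gamma_i}}\backslash\{0\}$. The orbit $\langle\rho\rangle(c)$ is fixed exactly when $\rho^z(c)=\xi^r c$ for some integer $z$; expanding each $c_{\beta_{\gamma_i}}$ in the primitive idempotents $e_{a_iq^{j}}$ and using $\rho^z(e_{a_iq^j})=\zeta^{za_iq^j}e_{a_iq^j}$ from Lemma \ref{lem2}, the direct-sum argument of Lemma \ref{lem3} shows this is equivalent to $\rho^z(c_{\beta_{\gamma_i}})=\xi^r c_{\beta_{\gamma_i}}$ for every $i$, and hence---because $c_{\beta_{\gamma_i}}\neq 0$ and $\xi^r\in\mathbb{F}_q$---to the single-variable system $\zeta^{za_i}=\xi^r$ for $i=1,\dots,l$. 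The key observation is that this condition is independent of the particular codeword $c$: thus for each $r$ either every orbit is fixed, in which case $\big|{\rm Fix}(\sigma_\xi^r)\big|=N_{\langle\rho\rangle}(\mathcal{C}_{\beta_{\gamma_1}\cdots\beta_{\gamma_l}})$ as computed in Lemma \ref{lem3}, or none is and $\big|{\rm Fix}(\sigma_\xi^r)\big|=0$.

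It then remains to decide, for each $r$, whether the simultaneous system $\zeta^{za_i}=\xi^r$ $(1\le i\le l)$ admits a common solution $z$, and to count the admissible $r$. Writing $\zeta=\omega^{(q^m-1)/tn}$ and $\xi=\omega^{(q^m-1)/(q-1)}$ and passing to discrete logarithms in the cyclic group $\mathbb{F}_{q^m}^\ast$ exactly as in Lemma \ref{lem5}, I would translate each equation into an order/subgroup-containment condition and argue that the system is solvable precisely when $\frac{q-1}{\gcd(r,q-1)}\mid \frac{tn}{\gcd(a_i,n)}$ holds for all $i$ at once, i.e. when $\frac{q-1}{\gcd(r,q-1)}$ divides $\gcd\big(\frac{tn}{\gcd(a_1,n)},\dots,\frac{tn}{\gcd(a_l,n)}\big)$; here one uses $a_i\equiv 1\pmod t$, which forces $\gcd(a_i,tn)=\gcd(a_i,n)$ and makes the orders $\frac{tn}{\gcd(a_i,n)}$ enter. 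Summing the constant value $N_{\langle\rho\rangle}(\mathcal{C}_{\beta_{\gamma_1}\cdots\beta_{\gamma_l}})$ over the admissible $r$ and counting them through the identity $\sum_{d\mid D}\varphi(d)=D$ (as at the end of Lemma \ref{lem5}) produces the factor $\gcd\big(q-1,\frac{tn}{\gcd(a_1,n)},\dots,\frac{tn}{\gcd(a_l,n)}\big)$, and combining this with the value of $N_{\langle\rho\rangle}(\mathcal{C}_{\beta_{\gamma_1}\cdots\beta_{\gamma_l}})$ from Lemma \ref{lem3} yields the asserted formula.

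The hard part will be precisely the decoupling step just invoked: passing from the existence of \emph{one} shift exponent $z$ that fixes all $l$ components simultaneously to the conjunction of the $l$ per-summand divisibility conditions. Unlike the irreducible case of Lemma \ref{lem5}, a single $z$ must now satisfy $\zeta^{za_i}=\xi^r$ for every $i$ at the same time, so I would have to analyse the cyclic subgroup of $\mathbb{F}_{q^m}^\ast$ (equivalently, the diagonal inside the product) generated by the relevant roots of unity with care, and verify that the constraint $a_i\equiv 1\pmod t$ is exactly what guarantees that the joint solvability is governed by the single quantity $\gcd\big(q-1,\frac{tn}{\gcd(a_1,n)},\dots,\frac{tn}{\gcd(a_l,n)}\big)$. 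Once this is established, everything else is a routine repetition of the discrete-logarithm manipulations and the Euler-$\varphi$ summation already carried out in Lemmas \ref{lem5} and \ref{lem3}.
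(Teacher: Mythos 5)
Your proposal follows the paper's own proof step for step: reduce via Lemma \ref{lem4} and Burnside's lemma (Lemma \ref{lem1}) to counting the $\langle\rho\rangle$-orbits fixed by $\sigma_\xi^r$, observe that an orbit $\langle\rho\rangle(c)$ is fixed exactly when a \emph{single} exponent $z$ satisfies $\zeta^{z(1+t\alpha_{\beta_{\gamma_i}})}=\xi^r$ for all $i$ simultaneously, note that this condition does not depend on $c$ (so $\big|{\rm Fix}(\sigma_\xi^r)\big|$ is either $N_{\langle\rho\rangle}(\mathcal{C}_{\beta_{\gamma_1}\cdots\beta_{\gamma_l}})$ or $0$), and finish with the Euler-$\varphi$ summation. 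The paper disposes of the step you call ``the hard part'' with the phrase ``again by the proof of Lemma \ref{lem5}'', i.e., it simply asserts that joint solvability (one $z$ for all $i$) is equivalent to the conjunction of the $l$ per-summand conditions $\frac{q-1}{\gcd(r,q-1)}\mid\frac{tn}{\gcd(1+t\alpha_{\beta_{\gamma_i}},n)}$. So structurally you and the paper are identical; the difference is that you flagged this equivalence as needing proof, while the paper takes it for granted.

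Unfortunately, the equivalence you refused to take for granted is false, and the congruence $1+t\alpha_{\beta_{\gamma_i}}\equiv 1\pmod t$ cannot repair it (for $t=1$ it is vacuous). Take $q=7$, $\lambda=1$, $t=1$, $n=12$, and the $7$-cyclotomic cosets $\{2\}$ and $\{3,9\}$ modulo $12$, so $a_1=2$, $a_2=3$, $d_1=1$, $d_2=2$. For $r=3$ we have $\xi^3=-1=\zeta^6$, and the separate conditions hold ($\frac{q-1}{\gcd(r,q-1)}=2$ divides both $\frac{12}{\gcd(2,12)}=6$ and $\frac{12}{\gcd(3,12)}=4$), yet a common $z$ would need $2z\equiv 6\pmod{12}$ (forcing $z$ odd) and $3z\equiv 6\pmod{12}$ (forcing $z$ even); hence $\sigma_\xi^3$ fixes no orbit although the criterion says it fixes all of them. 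Indeed a direct Burnside count for this code over $\langle\rho\rangle\times M$ (order $72$ acting on $6\cdot 48=288$ words, where only the identity has fixed points, because $\zeta^{2z}=\zeta^{3z}=b^{-1}$ forces $\zeta^z=1$ and $b=1$) gives exactly $288/72=4$ orbits, whereas the formula of Lemma \ref{lem7} evaluates to $\frac{6\cdot 48}{6\cdot 12}\gcd(2,3,12)\gcd(6,6,4)=8$. So the decoupling step is not merely a gap in your plan: it is the precise point at which the paper's own proof fails, and the stated formula is itself incorrect in general. The joint condition is governed by the single quantity $\gcd\big(tn,(q-1)a_1,a_2-a_1,\dots,a_l-a_1\big)$ (which in the example equals $\gcd(12,12,1)=1$, giving the correct count $4$); since joint solvability implies separate solvability, the displayed expression in Lemma \ref{lem7} is only an upper bound on $N_{\langle\rho,M\rangle}(\mathcal{C}_{\beta_{\gamma_1}\cdots\beta_{\gamma_l}})$, attained in the paper's examples but not always.
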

\begin{proof}
By Eq. (\ref{eq6}),   we get that 
 \begin{align*}
	 N_{\langle\rho,M\rangle}(\mathcal{C}_{\beta_{\gamma_1}\beta_{\gamma_2}\cdots\beta_{\gamma_l}} ) =\frac{1}{q-1}\sum_{r=1}^{q-1}\big| {\rm Fix}(\sigma_{\xi}^{r}) \big|,
\end{align*} where ${\rm Fix}   (\sigma_\xi^r)=\{\langle \rho\rangle(c)\in \langle \rho \rangle \backslash\mathcal{C}_{\beta_{\gamma_1}\beta_{\gamma_2}\cdots\beta_{\gamma_l}}|\langle\rho\rangle(c)=\langle\rho\rangle(\xi^rc)\}$.
We only  calculate the values of  $\big|{\rm Fix}   (\sigma_\xi^r)\big|$ for $1\leq r \leq q-1$. By the proof of Lemma \ref{lem5}, the condition $ \langle \rho \rangle(\xi^rc)= \langle \rho \rangle(c) $  is equivalent to requiring that there exists an integer $z$ such that $\rho^z(c)=\xi^rc$.

 Let $c=c_{\beta_{\gamma_1}}+c_{\beta_{\gamma_2}}+\cdots+c_{\beta_{\gamma_l}}\in  \mathcal{C}_{\beta_{\gamma_1}\beta_{\gamma_2}\cdots\beta_{\gamma_l}} $, where $c_{\beta_{\gamma_i}}\in \mathcal{I}_{\beta_{\gamma_i}}\backslash\{0\}$ for $i=1,\cdots,l$.
Then  $ \rho ^z(c)= \xi^rc $ if and only if 
\begin{eqnarray}\label{eq7}
\rho ^z   (c_{\beta_{\gamma_i}})= \xi^rc_{\beta_{\gamma_i}}
\end{eqnarray} for $i=1,\cdots,l$. Again by the proof of Lemma \ref{lem5},  we get that  the equalities (\ref{eq7}) hold if and only if 
\begin{equation}\label{eq8}
\frac{q-1}{\gcd(r,q-1)}\mid \frac{tn}{\gcd (1+t\alpha_{\beta_{\gamma_i}},n)}	 
\end{equation}
 for  $i=1,\cdots,l$. Thus, if equalities (\ref{eq8}) hold, then
	  \begin{equation}
\big| {\rm Fix}(\sigma_{\xi}^{r}) \big|=	\frac{\prod_{i=1}^{l}(q^{d_{\beta_{\gamma_i}}}-1)\gcd(1+t\alpha_{\beta_{\gamma_1}},\cdots,1+t\alpha_{\beta_{\gamma_l}},n)}{tn}
\end{equation} by Lemma \ref{lem3}; otherwise, $ \big| {\rm Fix}(\sigma_{\xi}^{r}) \big|=0$. 
Thereby,
\begin{align*}
 N_{\langle\rho,M\rangle}(\mathcal{C}_{\beta_{\gamma_1}\beta_{\gamma_2}\cdots\beta_{\gamma_l}} ) &= \frac{1}{q-1}\sum_{r=1}^{q-1}\big| {\rm Fix}(\sigma_{\xi}^{r}) \big|\\
	&= \frac{\prod_{i=1}^{l}(q^{d_{\beta_{\gamma_i}}}-1)\gcd(1+t\alpha_{\beta_{\gamma_1}},\cdots,1+t\alpha_{\beta_{\gamma_l}},n)}{(q-1)tn} \sum_{r=1,\frac{q-1}{\gcd(r,q-1)}\mid \frac{tn}{\gcd (1+t\alpha_{\beta_{\gamma_i}},n)}\atop i=1,\cdots,l}^{q-1}1\\
	&=\frac{\prod_{i=1}^{l}(q^{d_{\beta_{\gamma_i}}}-1)\gcd(1+t\alpha_{\beta_{\gamma_1}},\cdots,1+t\alpha_{\beta_{\gamma_l}},n)}{(q-1)tn}  \sum_{\frac{q-1}{g}\mid q-1\atop\frac{q-1}{g}\mid \frac{tn}{\gcd (1+t\alpha_{\beta_{\gamma_i}},n),}i=1,\cdots,l  }\varphi(\frac{q-1}{g})\\
	&=\frac{\prod_{i=1}^{l}(q^{d_{\beta_{\gamma_i}}}-1)\gcd(1+t\alpha_{\beta_{\gamma_1}},\cdots,1+t\alpha_{\beta_{\gamma_l}},n)}{(q-1)tn}\\
	&\cdot\gcd(q-1,\frac{tn}{\gcd(1+t\alpha_{\beta{\gamma_1}},n)},
	\cdots,\frac{tn}{\gcd(1+t\alpha_{\beta{\gamma_l}},n)}).
\end{align*}
\end{proof} 

\begin{thm}\label{thm2}
 Let $\mathcal{C}$ be a $\lambda$-constacyclic code of length $n$ over $\mathbb{F}_q$  with  ${\rm ord}(\lambda)=t$.
Suppose that 
\begin{equation}
	\mathcal{C}=\mathcal{I}_{\beta_1} \bigoplus \mathcal{I}_{\beta_2} \bigoplus  \cdots \bigoplus \mathcal{I}_{\beta_u},  
\end{equation}
where $0\leq \beta_1< \beta_2 <\cdots< \beta_u\leq s$, and   $\mathcal{I}_{\beta_i}$  corresponds to the $q$-cyclotomic coset $\{1+t\alpha_{\beta_i}, (1+t\alpha_{\beta_i})q,\cdots,(1+t\alpha_{\beta_i})q^{d_{\beta_i}-1}\}$. Then   $N_{\langle\rho,M\rangle}(\mathcal{C}^\ast)$   is equal  to 
	\begin{align*}
	\sum_{\{\gamma_1,\cdots,\gamma_l\}\in\{1,\cdots,u\}\atop 1\leq \gamma_1<\gamma_2<\cdots<\gamma_l\leq u}   &\frac{\prod_{i=1}^{l}(q^{d_{\beta{\gamma_i}}}-1)}{(q-1)tn}\gcd(n,1+t\alpha_{\beta{\gamma_1}},\cdots,1+t\alpha_{\beta{\gamma_l}})\\
		&\cdot\gcd(q-1,\frac{tn}{\gcd(n,1+t\alpha_{\beta{\gamma_1}})},
		\cdots,\frac{tn}{\gcd(n,1+t\alpha_{\beta{\gamma_l}})}). 
	\end{align*}
	Moreover, the number of nonzero weights of  $\mathcal{C}$ is less than or equal to  $N_{\langle\rho,M\rangle}(\mathcal{C}^\ast)$, with equality  if and only if for any two
	non-zero codewords $c_1,c_2\in\mathcal{C}$ with the same weight, there
	exist an  integer $i$ and an element $b\in \mathbb{F}_q^{\ast}$ such that $\rho^i(bc_1)=c_2$.
\end{thm}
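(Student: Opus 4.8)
The plan is to follow verbatim the strategy used in the proof of Theorem \ref{thm1}, replacing the role of Lemma \ref{lem3} by Lemma \ref{lem7}. First I would partition $\mathcal{C}^\ast$ according to the ``support pattern'' of each codeword relative to the decomposition $\mathcal{C}=\mathcal{I}_{\beta_1}\bigoplus\cdots\bigoplus\mathcal{I}_{\beta_u}$. Writing each nonzero $c\in\mathcal{C}^\ast$ uniquely as $c=c_{\beta_1}+\cdots+c_{\beta_u}$ with $c_{\beta_i}\in\mathcal{I}_{\beta_i}$, the index set $\{i\mid c_{\beta_i}\neq 0\}$ is non-empty, and collecting codewords having a common such set yields the disjoint decomposition
\begin{equation*}
\mathcal{C}^\ast=\bigcup_{\{\gamma_1,\cdots,\gamma_l\}\subseteq\{1,\cdots,u\}\atop 1\leq\gamma_1<\gamma_2<\cdots<\gamma_l\leq u}\mathcal{C}_{\beta_{\gamma_1}\beta_{\gamma_2}\cdots\beta_{\gamma_l}},
\end{equation*}
exactly as in Theorem \ref{thm1}.

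The key point to verify is that this partition is stable under the action of $\langle\rho,M\rangle$, so that the orbit count is additive over the pieces. Both generators respect the direct-sum decomposition: by the computation in the proof of Lemma \ref{lem2}, $\rho$ sends each primitive idempotent $e_{(1+t\alpha_{\beta_i})q^j}$ to a scalar multiple of itself, hence $\rho(\mathcal{I}_{\beta_i})\subseteq\mathcal{I}_{\beta_i}$; and $\sigma_b$ merely scales by $b\in\mathbb{F}_q^\ast$, so $\sigma_b(\mathcal{I}_{\beta_i})\subseteq\mathcal{I}_{\beta_i}$ as well. Consequently every element of $\langle\rho,M\rangle$ maps a codeword to one with the same support pattern, so each $\mathcal{C}_{\beta_{\gamma_1}\beta_{\gamma_2}\cdots\beta_{\gamma_l}}$ is $\langle\rho,M\rangle$-invariant, and the $\langle\rho,M\rangle$-orbits on $\mathcal{C}^\ast$ are precisely the union of the orbits on the individual pieces.

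Summing the orbit counts then gives
\begin{equation*}
N_{\langle\rho,M\rangle}(\mathcal{C}^\ast)=\sum_{\{\gamma_1,\cdots,\gamma_l\}\subseteq\{1,\cdots,u\}\atop 1\leq\gamma_1<\gamma_2<\cdots<\gamma_l\leq u}N_{\langle\rho,M\rangle}(\mathcal{C}_{\beta_{\gamma_1}\beta_{\gamma_2}\cdots\beta_{\gamma_l}}),
\end{equation*}
and substituting the closed-form expression from Lemma \ref{lem7} produces the stated formula. Finally, the inequality on the number of nonzero weights together with its equality criterion is an immediate application of Proposition \ref{pro1} to the subgroup $\mathcal{G}=\langle\rho,M\rangle$, whose elements act as $c\mapsto\rho^i(bc)$ for some integer $i$ and some $b\in\mathbb{F}_q^\ast$. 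I do not expect any serious obstacle here: the entire argument is structurally identical to that of Theorem \ref{thm1}, and the only step genuinely needing a word of justification is the $\langle\rho,M\rangle$-invariance of the support-pattern partition, which follows from the fact that both $\rho$ and every $\sigma_b\in M$ stabilize each minimal ideal $\mathcal{I}_{\beta_i}$.
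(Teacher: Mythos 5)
Your proposal is correct and follows essentially the same route as the paper: decompose $\mathcal{C}^\ast$ by support pattern into the pieces $\mathcal{C}_{\beta_{\gamma_1}\cdots\beta_{\gamma_l}}$, apply Lemma \ref{lem7} to each piece and sum, then invoke Proposition \ref{pro1} for the weight bound and its equality criterion. The paper's own proof is just a one-line reference to Theorem \ref{thm1}, Proposition \ref{pro1} and Lemma \ref{lem7}; your explicit verification that $\rho$ and each $\sigma_b$ stabilize every minimal ideal, so that the partition is $\langle\rho,M\rangle$-invariant and orbit counts add, supplies the detail the paper leaves implicit.
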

\begin{proof}
	 Similar to the proof of Theorem \ref{thm1}, by Proposition \ref{pro1} and Lemma \ref{lem7},  the desired results are obtained.
\end{proof}

\begin{coro}\label{co2}
	Let $\mathcal{C}=\mathcal{I}_{\beta_1}\bigoplus\mathcal{I}_{\beta_2}$ be a $\lambda$-constacyclic code over $\mathbb{F}_q$,where $0\leq \beta_1< \beta_2 \leq s$, and   $\mathcal{I}_{\beta_i}$ corresponds to the $q$-cyclotomic coset $\{1+t\alpha_{\beta_i},(1+t\alpha_{\beta_i})q,\cdots,(1+t\alpha_{\beta_i})q^{d_{\beta_i}-1}\}$ for $1\leq i\leq 2$. Then  $N_{\langle\rho,M\rangle}(\mathcal{C}^\ast)$  is equal  to 
	\begin{align*}
		&\frac{\prod\limits_{i=1}^{2}(q^{d_{\beta_i}}-1)\gcd(n,1+t\alpha_{\beta_1},1+t\alpha_{\beta_2})}{(q-1)tn}  \gcd(   q-1,\frac{tn}{\gcd(n,1+t\alpha_{\beta_1})}, \frac{tn}{\gcd(n,1+t\alpha_{\beta_2})}  ) \\
		& +\frac{(q^{d_{\beta_1}}-1)\gcd((q-1)(1+t\alpha_{\beta_1}),tn)}{(q-1)tn} +\frac{(q^{d_{\beta_2}}-1)\gcd((q-1)(1+t\alpha_{\beta_2}),tn)}{(q-1)tn}.
	\end{align*}
	Moreover, the number of nonzero weights of  $\mathcal{C}$ is less than or equal to $N_{\langle\rho,M\rangle}(\mathcal{C}^\ast)$, with equality  if and only if for any two
non-zero codewords $c_1,c_2\in\mathcal{C}$ with the same weight, there
exist an  integer $i$ and an element $b\in \mathbb{F}_q^{\ast}$ such that $\rho^i(bc_1)=c_2$.
\end{coro}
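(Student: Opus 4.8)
The plan is to specialize the general formula in Theorem~\ref{thm2} to the case $u=2$, that is, to the decomposition $\mathcal{C}=\mathcal{I}_{\beta_1}\bigoplus\mathcal{I}_{\beta_2}$ into exactly two irreducible $\lambda$-constacyclic summands. The whole content of the corollary is bookkeeping: Theorem~\ref{thm2} expresses $N_{\langle\rho,M\rangle}(\mathcal{C}^\ast)$ as a sum over all nonempty subsets $\{\gamma_1,\dots,\gamma_l\}\subseteq\{1,\dots,u\}$, and when $u=2$ there are exactly three such subsets, namely $\{1\}$, $\{2\}$, and $\{1,2\}$. I would enumerate these three terms and write each one out explicitly.

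First I would handle the two singleton subsets. For $\{\gamma_1\}=\{1\}$ (and likewise $\{2\}$), the product $\prod_{i}(q^{d_{\beta_{\gamma_i}}}-1)$ collapses to the single factor $q^{d_{\beta_1}}-1$, the multi-argument $\gcd(n,1+t\alpha_{\beta_1})$ has only one nontrivial argument, and the factor $\gcd\bigl(q-1,\tfrac{tn}{\gcd(n,1+t\alpha_{\beta_1})}\bigr)$ appears with a single entry. Here the key simplification is the identity
\[
\gcd(n,1+t\alpha_{\beta_1})\cdot\gcd\Bigl(q-1,\tfrac{tn}{\gcd(n,1+t\alpha_{\beta_1})}\Bigr)=\gcd\bigl((q-1)(1+t\alpha_{\beta_1}),tn\bigr),
\]
which is precisely the combination appearing in Lemma~\ref{lem5}; indeed the singleton term must reproduce the irreducible-code formula $\frac{(q^{d_{\beta_1}}-1)\gcd((q-1)(1+t\alpha_{\beta_1}),tn)}{(q-1)tn}$, and I would invoke Lemma~\ref{lem5} directly rather than re-derive this number-theoretic identity. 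This gives the second and third summands of the corollary verbatim.

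For the two-element subset $\{\gamma_1,\gamma_2\}=\{1,2\}$ I would simply copy the generic term of Theorem~\ref{thm2} with $l=2$: the product is $\prod_{i=1}^2(q^{d_{\beta_i}}-1)$, the first gcd becomes $\gcd(n,1+t\alpha_{\beta_1},1+t\alpha_{\beta_2})$, and the second becomes $\gcd\bigl(q-1,\tfrac{tn}{\gcd(n,1+t\alpha_{\beta_1})},\tfrac{tn}{\gcd(n,1+t\alpha_{\beta_2})}\bigr)$, all divided by $(q-1)tn$. Adding the three terms yields exactly the displayed expression. The statement about nonzero weights and the equality condition is an immediate consequence of Proposition~\ref{pro1} applied to the subgroup $\langle\rho,M\rangle$, inherited unchanged from Theorem~\ref{thm2}.

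There is essentially no obstacle here: the corollary is a direct instantiation of Theorem~\ref{thm2}, so the proof is the single sentence that the result follows by taking $u=2$ in Theorem~\ref{thm2} and using Lemma~\ref{lem5} to rewrite the two singleton contributions in the stated closed form. The only point requiring a moment's care is verifying that the singleton terms of Theorem~\ref{thm2} coincide with the Lemma~\ref{lem5} formula, i.e.\ the gcd identity above; once that is noted, the rest is pure substitution.
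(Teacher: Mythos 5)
Your proposal is correct and matches the paper's proof, which is exactly the one-line specialization of Theorem~\ref{thm2} to $u=2$ (the three subsets $\{1\}$, $\{2$\}, $\{1,2\}$). Your extra observation that the singleton terms match Lemma~\ref{lem5} via the identity $\gcd(n,1+t\alpha_{\beta_i})\cdot\gcd\bigl(q-1,\tfrac{tn}{\gcd(n,1+t\alpha_{\beta_i})}\bigr)=\gcd\bigl((q-1)(1+t\alpha_{\beta_i}),tn\bigr)$ is valid (it uses $\gcd(1+t\alpha_{\beta_i},t)=1$) and is the same rewriting the paper performs, implicitly, at the end of the proof of Lemma~\ref{lem5}.
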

\begin{proof}
	The desired results are directly obtained by Theorem \ref{thm2}.
\end{proof}
\begin{exa}\label{ex4}
	Let $q=3,~n=91,~t=2$ in Corollary. Let $\mathcal{I}'$ and $\mathcal{I}''$ be the irreducible $2$-constacyclic codes over $\mathbb{F}_3$ corresponding to the 3-cyclotomic cosets	 $\{91\}$ and $\{7,21,63\}$, respectively. By Corollary \ref{co2}, 
	the number of nonzero weights of   $\mathcal{I}'\bigoplus\mathcal{I}''$ is less than or equal to  $N_{\langle\rho,M\rangle}((\mathcal{I}'\bigoplus\mathcal{I}'')^\ast)$, where the latter is  equal to
	\begin{align*}
		&\frac{(3^3-1)(3-1)\gcd(91,7,91)}{2\times2\times91}\gcd(2,2,26)+\\
		&\frac{(3^3-1)\gcd(2\times7,2\times91)}{2\times2\times91}+\frac{(3-1)\gcd(2\times91,2\times91)}{2\times2\times91}=4
	\end{align*} 
	By using Magma \cite{bosma}, we  get that the weight distribution of
	$\mathcal{I}'\bigoplus\mathcal{I}''$ is  $1+26x^{49}+26x^{63}+26x^{70}+2x^{91}$. This implies that the bound  given in Corollary \ref{co2} ( or Theorem \ref{thm2} ) is  tight.
\end{exa}

Comparing the expressions for $N_{\langle\rho\rangle}(\mathcal{C}^\ast)$ in Theorem \ref{thm1} and $N_{\langle\rho,M\rangle}(\mathcal{C}^\ast)$ in Theorem \ref{thm2}, they  differ only by a factor of $$\Delta_{\beta_{\gamma_1}\beta_{\gamma_2}\cdots\beta_{\gamma_l}}:=\frac{\gcd(q-1,\frac{tn}{\gcd(n,1+t\alpha_{\beta{\gamma_1}})},\cdots,\frac{tn}{\gcd(n,1+t\alpha_{\beta{\gamma_l}})})}{q-1}$$ for each term in the summation. It obvious that $\Delta_{\beta_{\gamma_1}\beta_{\gamma_2}\cdots\beta_{\gamma_l}}\leq 1$. Thereby, the bound in Theorem \ref{thm2} is sharper than the bound in Theorem \ref{thm1} in general, but it is also more complex.
 Moreover,  $N_{\langle\rho\rangle}(\mathcal{C}^\ast)$= $N_{\langle\rho,M\rangle}(\mathcal{C}^\ast)$ if $	\Delta_{\beta_{\gamma_1}\beta_{\gamma_2}\cdots\beta_{\gamma_l}}=1$ for all  $\{\gamma_1,\cdots,\gamma_l\}\in\{1,\cdots,u\}$ and $ 1\leq \gamma_1<\gamma_2<\cdots<\gamma_l\leq u$. The following proposition give a sufficient condition  to guarantee   $N_{\langle\rho\rangle}(\mathcal{C}^\ast)$= $N_{\langle\rho,M\rangle}(\mathcal{C}^\ast)$.
\begin{prop}\label{pro2}
	 	Let $\mathcal{C}$ be a $\lambda$-constacyclic code of length $n$ over $\mathbb{F}_q$.
	 Suppose that 
	 \begin{equation}
	 	\mathcal{C}=\mathcal{I}_{\beta_1} \bigoplus \mathcal{I}_{\beta_2} \bigoplus  \cdots \bigoplus \mathcal{I}_{\beta_u},  
	 \end{equation}
	 where $0\leq \beta_1< \beta_2 <\cdots< \beta_u\leq s$, and    $\mathcal{I}_{\beta_i}$ corresponds to the $q$-cyclotomic coset $\{1+t\alpha_{\beta_i},(1+t\alpha_{\beta_i})q,\cdots,(1+t\alpha_{\beta_i})q^{d_{\beta_i}-1}\}$ for $0\leq i\leq u$.  Then   $N_{\langle\rho\rangle}(\mathcal{C}^\ast)$= $N_{\langle\rho,M\rangle}(\mathcal{C}^\ast)$  if  		 
	 \begin{equation}\label{suff}
	\frac{q-1}{t} {\rm lcm}\big(\gcd(n,1+t\alpha_{\beta{\gamma_1}}),\cdots,\gcd(n,1+t\alpha_{\beta{\gamma_l}})     \big) \big| n
	 \end{equation} for  $\{\gamma_1,\cdots,\gamma_l\}\in\{1,\cdots,u\}$ and $ 1\leq \gamma_1<\gamma_2<\cdots<\gamma_l\leq u$. In particular, if $t=q-1$, then  $N_{\langle\rho\rangle}(\mathcal{C}^\ast)$= $N_{\langle\rho,M\rangle}(\mathcal{C}^\ast)$.
\end{prop}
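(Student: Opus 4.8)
The plan is to reduce the claimed identity to the per-coset statement that the correction factor $\Delta_{\beta_{\gamma_1}\beta_{\gamma_2}\cdots\beta_{\gamma_l}}$ equals $1$. Comparing the formula for $N_{\langle\rho\rangle}(\mathcal{C}^\ast)$ in Theorem \ref{thm1} with that for $N_{\langle\rho,M\rangle}(\mathcal{C}^\ast)$ in Theorem \ref{thm2} summand by summand, every term of the latter is exactly $\Delta_{\beta_{\gamma_1}\cdots\beta_{\gamma_l}}$ times the corresponding term of the former, and each term of $N_{\langle\rho\rangle}(\mathcal{C}^\ast)$ is a strictly positive integer. Since $\Delta_{\beta_{\gamma_1}\cdots\beta_{\gamma_l}}\le 1$, it follows that $N_{\langle\rho,M\rangle}(\mathcal{C}^\ast)\le N_{\langle\rho\rangle}(\mathcal{C}^\ast)$, with equality precisely when $\Delta_{\beta_{\gamma_1}\cdots\beta_{\gamma_l}}=1$ for every subset $\{\gamma_1,\dots,\gamma_l\}\subseteq\{1,\dots,u\}$. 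Thus it suffices to prove that condition (\ref{suff}) forces each such $\Delta$ to equal $1$.

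First I would unwind the definition of $\Delta$. Writing $g_{\gamma_i}:=\gcd(n,1+t\alpha_{\beta_{\gamma_i}})$, the factor $\Delta_{\beta_{\gamma_1}\cdots\beta_{\gamma_l}}$ equals $1$ if and only if $\gcd\!\big(q-1,\tfrac{tn}{g_{\gamma_1}},\dots,\tfrac{tn}{g_{\gamma_l}}\big)=q-1$, which holds if and only if $(q-1)\mid \tfrac{tn}{g_{\gamma_i}}$ for every $i$. Here I would use that $t$ is the order of $\lambda$ in $\mathbb{F}_q^\ast$, so $t\mid q-1$ and $\tfrac{q-1}{t}$ is a genuine positive integer. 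Then I would translate condition (\ref{suff}): since each $g_{\gamma_i}$ divides ${\rm lcm}(g_{\gamma_1},\dots,g_{\gamma_l})$, the hypothesis $\tfrac{q-1}{t}\,{\rm lcm}(g_{\gamma_1},\dots,g_{\gamma_l})\mid n$ yields $\tfrac{q-1}{t}\,g_{\gamma_i}\mid n$, i.e.\ $(q-1)g_{\gamma_i}\mid tn$, i.e.\ $(q-1)\mid \tfrac{tn}{g_{\gamma_i}}$, for each $i$. This is exactly the condition extracted above, so $\Delta_{\beta_{\gamma_1}\cdots\beta_{\gamma_l}}=1$; summing over all subsets then gives the desired equality.

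For the special case $t=q-1$, the factor $\tfrac{q-1}{t}$ collapses to $1$, so (\ref{suff}) becomes ${\rm lcm}(g_{\gamma_1},\dots,g_{\gamma_l})\mid n$; since each $g_{\gamma_i}=\gcd(n,1+t\alpha_{\beta_{\gamma_i}})$ already divides $n$, and the least common multiple of divisors of $n$ again divides $n$, this holds automatically for every subset, so the equality is unconditional. The computations here are elementary, and I expect no genuine obstacle; the one point demanding care is the bookkeeping that turns the single divisibility (\ref{suff}) (stated with an ${\rm lcm}$ over a subset) into the individual conditions $(q-1)\mid tn/g_{\gamma_i}$ that actually control $\Delta$, together with the observation that requiring $\Delta=1$ over all subsets in fact only constrains each coset separately.
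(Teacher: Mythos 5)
Your proposal is correct and follows essentially the same route as the paper: reduce the equality $N_{\langle\rho\rangle}(\mathcal{C}^\ast)=N_{\langle\rho,M\rangle}(\mathcal{C}^\ast)$ to $\Delta_{\beta_{\gamma_1}\cdots\beta_{\gamma_l}}=1$ for every subset, and then show that condition (\ref{suff}) forces this via the identity relating $\gcd\big(tn/g_{\gamma_1},\dots,tn/g_{\gamma_l}\big)$ to $tn/\mathrm{lcm}(g_{\gamma_1},\dots,g_{\gamma_l})$ (the paper runs this as a chain of equivalences, you as a one-directional implication through the per-index conditions $(q-1)\mid tn/g_{\gamma_i}$, which suffices since only the ``if'' direction is claimed). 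The treatment of the special case $t=q-1$ also matches the paper's.
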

\begin{proof}For  $\{\gamma_1,\cdots,\gamma_l\}\in\{1,\cdots,u\}$ and $ 1\leq \gamma_1<\gamma_2<\cdots<\gamma_l\leq u$, we have
\begin{align*} 
	 	\Delta_{\beta_{\gamma_1}\beta_{\gamma_2}\cdots\beta_{\gamma_l}}=1 &\Leftrightarrow q-1 \big| \gcd(\frac{tn}{\gcd(n,1+t\alpha_{\beta{\gamma_1}})},
	 		\cdots,\frac{tn}{\gcd(n,1+t\alpha_{\beta{\gamma_l}})})\\
	 		&\Leftrightarrow q-1  \big| \frac{tn}{{\rm lcm}\big(\gcd(n,1+t\alpha_{\beta{\gamma_1}}),\cdots,\gcd(n,1+t\alpha_{\beta{\gamma_l}})     \big)}\\
	 		&\Leftrightarrow\frac{q-1}{t} {\rm lcm}\big(\gcd(n,1+t\alpha_{\beta{\gamma_1}}),\cdots,\gcd(n,1+t\alpha_{\beta{\gamma_l}})     \big) \big| n.
\end{align*}In particular, if $t=q-1$, then  $\Delta_{\beta_{\gamma_1}\beta_{\gamma_2}\cdots\beta_{\gamma_l}}=1$ for all $\{\gamma_1,\cdots,\gamma_l\}\in\{1,\cdots,u\}$ and $ 1\leq \gamma_1<\gamma_2<\cdots<\gamma_l\leq u$ and $N_{\langle\rho\rangle}(\mathcal{C}^\ast)$= $N_{\langle\rho,M\rangle}(\mathcal{C}^\ast)$.  This completes the proof.
\end{proof}
\begin{exa}
	 As is easily checked, the values of $t$ and $1+t\alpha_i$ in  Examples \ref{ex1} to \ref{ex4}  satisfy the sufficient condition (\ref{suff}) in Proposition \ref{pro2}.  Then $N_{\langle\rho\rangle}(\mathcal{C}^\ast)$= $N_{\langle\rho,M\rangle}(\mathcal{C}^\ast)$.
\end{exa}
\section{Conclusion}
In this paper, we calculate the numbers of orbits of $\langle\rho\rangle$ and $\langle\rho,M\rangle$
 on a simple-root $\lambda$-constacyclic code over $\mathbb{F}_q$, respectively.
 By the virtue of Proposition \ref{pro1}, we get   explicit bounds  on the number of nonzero weights of a simple-root $\lambda$-constacyclic code in Theorems \ref{thm1} and  \ref{thm2}. The bound in Theorem \ref{thm2} is sharper than that of   Theorem  \ref{thm1} in general, but it is also more complex.  We present a sufficient condition to  guarantee the equality of these two bounds.  Some irreducible and reducible $\lambda$-constacyclic codes meeting this bound are presented, revealing the bound is tight. 
If $\lambda=1$ and $t=1$, then Theorems \ref{thm1} and \ref{thm2} are consistent with the main results in \cite{chen1}. Hence, Our main results  
 generalize some of the results in \cite{chen1}.

\end{document}